\definecolor{shadecolor}{rgb}{0.9,0.9,0.9}
\newtheorem{definition}{Definition}
\newtheorem{proposition}{Proposition}
\newtheorem{lemma}[proposition]{Lemma}
\newtheorem{theorem}[proposition]{Theorem}
\newtheorem{corollary}[proposition]{Corollary}
\def\squareforqed{\hbox{\rlap{$\sqcap$}$\sqcup$}}
\def\qed{\ifmmode\squareforqed\else{\unskip\nobreak\hfil
\penalty50\hskip1em\null\nobreak\hfil\squareforqed
\parfillskip=0pt\finalhyphendemerits=0\endgraf}\fi}
\def\endenv{\ifmmode\;\else{\unskip\nobreak\hfil
\penalty50\hskip1em\null\nobreak\hfil\;
\parfillskip=0pt\finalhyphendemerits=0\endgraf}\fi}
\newenvironment{proof}{\noindent \textbf{{Proof~} }}{\hfill $\blacksquare$}
\newcounter{remark}
\newenvironment{remark}[1][]{\refstepcounter{remark}\par\medskip\noindent%
\textbf{Remark~\theremark #1} }{\medskip}
\newcounter{example}
\mathchardef\ordinarycolon\mathcode`\:
\def\vcentcolon{\mathrel{\mathop\ordinarycolon}}
\newmdenv[skipabove=7pt,
skipbelow=7pt,
backgroundcolor=darkblue!15,
innerleftmargin=5pt,
innerrightmargin=5pt,
innertopmargin=5pt,
leftmargin=0cm,
rightmargin=0cm,
innerbottommargin=5pt,
linewidth=1pt]{tBox}
\newmdenv[skipabove=7pt,
skipbelow=7pt,
backgroundcolor=red!15,
innerleftmargin=5pt,
innerrightmargin=5pt,
innertopmargin=5pt,
leftmargin=0cm,
rightmargin=0cm,
innerbottommargin=5pt,
linewidth=1pt]{rBox}
\newmdenv[skipabove=7pt,
skipbelow=7pt,
backgroundcolor=blue2!25,
innerleftmargin=5pt,
innerrightmargin=5pt,
innertopmargin=5pt,
leftmargin=0cm,
rightmargin=0cm,
innerbottommargin=5pt,
linewidth=1pt]{dBox}
\newmdenv[skipabove=7pt,
skipbelow=7pt,
backgroundcolor=darkkblue!15,
innerleftmargin=5pt,
innerrightmargin=5pt,
innertopmargin=5pt,
leftmargin=0cm,
rightmargin=0cm,
innerbottommargin=5pt,
linewidth=1pt]{sBox}
\definecolor{darkblue}{RGB}{0,76,156}
\definecolor{darkkblue}{RGB}{0,0,153}
\definecolor{blue2}{RGB}{102,178,255}
\definecolor{darkred}{RGB}{195,0,0}
\newcommand{\nc}{\newcommand}
\nc{\rnc}{\renewcommand}
\nc{\lbar}[1]{\overline{#1}}
\nc{\bra}[1]{\langle#1|}
\nc{\ket}[1]{|#1\rangle}
\nc{\ketbra}[2]{|#1\rangle\!\langle#2|}
\nc{\braket}[2]{\langle#1|#2\rangle}
\nc{\proj}[1]{| #1\rangle\!\langle #1 |}
\nc{\avg}[1]{\langle#1\rangle}
\nc{\smfrac}[2]{\mbox{$\frac{#1}{#2}$}}
\nc{\tr}{\operatorname{Tr}}
\nc{\ox}{\otimes}
\nc{\dg}{\dagger}
\nc{\dn}{\downarrow}
\nc{\cA}{{\cal A}}
\nc{\cB}{{\cal B}}
\nc{\cC}{{\cal C}}
\nc{\cD}{{\cal D}}
\nc{\cE}{{\cal E}}
\nc{\cF}{{\cal F}}
\nc{\cG}{{\cal G}}
\nc{\cH}{{\cal H}}
\nc{\cI}{{\cal I}}
\nc{\cJ}{{\cal J}}
\nc{\cK}{{\cal K}}
\nc{\cL}{{\cal L}}
\nc{\cM}{{\cal M}}
\nc{\cN}{{\cal N}}
\nc{\cO}{{\cal O}}
\nc{\cP}{{\cal P}}
\nc{\cQ}{{\cal Q}}
\nc{\cR}{{\cal R}}
\nc{\cS}{{\cal S}}
\nc{\cT}{{\cal T}}
\nc{\cU}{{\cal U}}
\nc{\cV}{{\cal V}}
\nc{\cX}{{\cal X}}
\nc{\cY}{{\cal Y}}
\nc{\cZ}{{\cal Z}}
\nc{\cW}{{\cal W}}
\nc{\csupp}{{\operatorname{csupp}}}
\nc{\qsupp}{{\operatorname{qsupp}}}
\nc{\var}{{\operatorname{var}}}
\nc{\rar}{\rightarrow}
\nc{\lrar}{\longrightarrow}
\nc{\polylog}{{\operatorname{polylog}}}
\nc{\wt}{{\operatorname{wt}}}
\nc{\av}[1]{{\left\langle {#1} \right\rangle}}
\nc{\supp}{{\operatorname{supp}}}
\nc{\argmin}{{\operatorname{argmin}}}
\def\x{\xi}
\nc{\RR}{{{\mathbb R}}}
\nc{\CC}{{{\mathbb C}}}
\nc{\FF}{{{\mathbb F}}}
\nc{\NN}{{{\mathbb N}}}
\nc{\ZZ}{{{\mathbb Z}}}
\nc{\PP}{{{\mathbb P}}}
\nc{\QQ}{{{\mathbb Q}}}
\nc{\UU}{{{\mathbb U}}}
\nc{\EE}{{{\mathbb E}}}
\nc{\id}{{\operatorname{id}}}
\nc{\CHSH}{{\operatorname{CHSH}}}
\nc{\be}{\begin{equation}}
\nc{\ee}{{\end{equation}}}
\nc{\bea}{\begin{eqnarray}}
\nc{\eea}{\end{eqnarray}}
\nc{\rU}{\mbox{U}}
\nc{\ob}[1]{#1}
\nc{\SEP}{{\text{\rm SEP}}}
\nc{\NS}{{\text{\rm NS}}}
\nc{\LOCC}{{\text{\rm LOCC}}}
\nc{\PPT}{{\text{\rm PPT}}}
\nc{\EXT}{{\text{\rm EXT}}}
\nc{\Sym}{{\operatorname{Sym}}}
\nc{\ERLO}{{E_{\text{r,LO}}}}
\nc{\ERLOCC}{{E_{\text{r,LOCC}}}}
\nc{\ERPPT}{{E_{\text{r,PPT}}}}
\nc{\ERLOCCinfty}{{E^{\infty}_{\text{r,LOCC}}}}
\nc{\Aram}{{\operatorname{\sf A}}}
\def\grd@save@target#1{%
  \def\grd@target{#1}}
\def\grd@save@start#1{%
  \def\grd@start{#1}}
\tikzset{
  grid with coordinates/.style={
    to path={%
      \pgfextra{%
        \edef\grd@@target{(\tikztotarget)}%
        \tikz@scan@one@point\grd@save@target\grd@@target\relax
        \edef\grd@@start{(\tikztostart)}%
        \tikz@scan@one@point\grd@save@start\grd@@start\relax
        \draw[minor help lines,magenta] (\tikztostart) grid (\tikztotarget);
        \draw[major help lines] (\tikztostart) grid (\tikztotarget);
        \grd@start
        \pgfmathsetmacro{\grd@xa}{\the\pgf@x/1cm}
        \pgfmathsetmacro{\grd@ya}{\the\pgf@y/1cm}
        \grd@target
        \pgfmathsetmacro{\grd@xb}{\the\pgf@x/1cm}
        \pgfmathsetmacro{\grd@yb}{\the\pgf@y/1cm}
        \pgfmathsetmacro{\grd@xc}{\grd@xa + \pgfkeysvalueof{/tikz/grid with coordinates/major step}}
        \pgfmathsetmacro{\grd@yc}{\grd@ya + \pgfkeysvalueof{/tikz/grid with coordinates/major step}}
        \foreach \x in {\grd@xa,\grd@xc,...,\grd@xb}
        \node[anchor=north] at (\x,\grd@ya) {\pgfmathprintnumber{\x}};
        \foreach \y in {\grd@ya,\grd@yc,...,\grd@yb}
        \node[anchor=east] at (\grd@xa,\y) {\pgfmathprintnumber{\y}};
      }
    }
  },
  minor help lines/.style={
    help lines,
    step=\pgfkeysvalueof{/tikz/grid with coordinates/minor step}
  },
  major help lines/.style={
    help lines,
    line width=\pgfkeysvalueof{/tikz/grid with coordinates/major line width},
    step=\pgfkeysvalueof{/tikz/grid with coordinates/major step}
  },
  grid with coordinates/.cd,
  minor step/.initial=.2,
  major step/.initial=1,
  major line width/.initial=2pt,
}
\def\problem@s{}
\newcounter{problems@cnt}
\newcommand{\allproblems}{\problem@s}
\definecolor{colortwo}{rgb}{0.4,0.77,0.17}
\definecolor{colorthree}{rgb}{0.01,0.51,0.93}
\begin{document}
\title{LCQNN:~Linear Combination of Quantum Neural Networks}

\author{Hongshun Yao}
\email{yaohongshun2021@gmail.com}
\affiliation{Thrust of Artificial Intelligence, Information Hub,\\
Hong Kong University of Science and Technology (Guangzhou), Guangdong 511453, China}
\author{Xia Liu}
\email{liuxia1113@outlook.com}
\affiliation{Thrust of Artificial Intelligence, Information Hub,\\
Hong Kong University of Science and Technology (Guangzhou), Guangdong 511453, China}
\author{Mingrui Jing}
\email{mjing638@connect.hkust-gz.edu.cn}
\affiliation{Thrust of Artificial Intelligence, Information Hub,\\
Hong Kong University of Science and Technology (Guangzhou), Guangdong 511453, China}
\author{Guangxi Li}
\email{liguangxi@quantumsc.cn}
\affiliation{Quantum Science Center of Guangdong-Hong Kong-Macao Greater Bay Area, Shenzhen 518045, China}
\affiliation{Thrust of Artificial Intelligence, Information Hub,\\
Hong Kong University of Science and Technology (Guangzhou), Guangdong 511453, China}
\author{Xin Wang}
\email{felixxinwang@hkust-gz.edu.cn}
\affiliation{Thrust of Artificial Intelligence, Information Hub,\\
Hong Kong University of Science and Technology (Guangzhou), Guangdong 511453, China}

\begin{abstract}
Quantum neural networks combine quantum computing with advanced data-driven methods, offering promising applications in quantum machine learning. However, the optimal paradigm for balancing trainability and expressivity in QNNs remains an open question. To address this, we introduce the Linear Combination of Quantum Neural Networks (LCQNN) framework, which uses the linear combination of unitaries concept to create a tunable design that mitigates vanishing gradients without incurring excessive classical simulability. We show how specific structural choices, such as adopting $k$-local control unitaries or restricting the model to certain group-theoretic subspaces, prevent gradients from collapsing while maintaining sufficient expressivity for complex tasks. We further employ the LCQNN model to handle supervised learning tasks, demonstrating its effectiveness on real datasets. In group action scenarios, we show that by exploiting symmetry and excluding exponentially large irreducible subspaces, the model circumvents barren plateaus. Overall, LCQNN provides a novel framework for focusing quantum resources into architectures that are practically trainable yet expressive enough to tackle challenging machine learning applications.
\end{abstract}

\date{\today}
\maketitle

% \tableofcontents

%%%%%%%%%%%%%%%%%%%%%%%%%%%%%%%%%%%%%%%%%%%%%%%%%%%%%%%%%%%%%%%%%%%%%%%%%
%%%%%%%%%%%%%%%%%%%%%%%%%%%%%%%%%%%%%%%%%%%%%%%%%%%%%%%%%%%%%%%%%%%%%%%%%%%
\section{Introduction}
Quantum machine learning (QML) has quickly gained traction as a hotspot research direction for integrating quantum computing with data-driven and machine learning methodologies~\cite{schuld2015introduction,biamonte2017quantum,zhang2020toward,li2022recent,qian2022dilemma,caro2022generalization,cerezo2022challenges,tian2023recent,jerbi2023quantum,du2025quantum}. Within this broad area, quantum neural networks (QNNs)~\cite{benedetti2019parameterized,ostaszewski2021structure} have been proposed to harness the computational resources of quantum devices in tandem with ideas drawn from deep-learning architectures, demonstrating potential in areas such as image recognition, precise quantum chemistry applications and optimization algorithms~\cite{bauer2020quantum,hempel2018quantum,farhi2014quantum,zhou2020quantum,chen2020variational,takaki2021learning,zhang2024statistical}.

Nevertheless, effectively training QNNs encounters several ongoing challenges, including Barren Plateaus (BP) impeding gradient-based optimization by creating extremely flat regions in the loss landscape, and local minima trapping models away from global optima~\cite{Mcclean2018barren,Cerezo2021cost,You2021exponentially,zhang2024statistical}. To address these limitations, solutions such as circuit architecture refinements, parameter initialization heuristics, and error-mitigation techniques~\cite{Liu2024mitigating,park2024hamiltonian,wiersema2020exploring,farhi2014quantum,zhou2020quantum,Zhang2022,wang2024trainability,shi2024avoiding} have been developed to navigate hardware constraints and convergence issues when developing powerful quantum-enhanced learning models. Recently, unified theories of BP for trace-form loss function have been demonstrated~\cite{Fontana2024characterizing,Ragone2024lie}, which have accelerated the investigation on different QNNs and their trainability with respect to specific problem structures~\cite{Nguyen2024theory,schatzki2024theoretical,Allcock2024dynamical,kazi2024analyzing,Mao2024towards,Jing2025quantum}.

Although those efforts have been made to eliminate the fundamental scalability issue of QNNs, a recent perspective article raises a significant point that many circuit design strategies avoiding BP, however, contain structural constraints that sometimes enable efficient classical simulability, limiting the potential for genuine quantum advantage~\cite{Cerezo2023does}. One of the core insights is that introducing symmetries or limiting circuit depth can keep gradients from vanishing, but at the risk of reducing the effective dimensionality of the parameter space to a region that classical methods handle efficiently. Consequently, while shallow or highly structured QNNs appear more trainable, researchers have questioned whether these BP-free architectures truly leverage quantum resources or can be replaced with classical procedures, especially when an initial data-gathering phase on a quantum device suffices to replicate the loss landscape on classical hardware~\cite{Yu2022power,Ragone2024lie}.

Two prominent questions arise. First, \textit{to what extent can design choices for QNNs be inspired by powerful quantum algorithms that achieve proven advantages in areas such as solving linear equations~\cite{Harrow2009quantum,Childs2012hamiltonian}, function approximation and state searching~\cite{Rossi2022multivariable,Low2017optimal,Gilyen2019quantum}.} The rationale is that methods used by well-known quantum algorithms to navigate large state spaces might help QNNs avoid barren plateaus without simultaneously falling back into classical simulability. Second, \textit{how should we formally examine both trainability and expressivity on these new structured QNNs, ensuring that the QNN remains not only free from exponentially vanishing gradients but also nontrivial from a computational perspective.}

To address these, we design a new QNN framework that integrates the concepts of linear combination of unitaries (LCU)~\cite{Childs2012hamiltonian} with tunable circuits and combination coefficients. We begin by introducing the circuit construction of the so-called \textit{linear combination of quantum neural networks} (LCQNN), which consists of a coefficient parametrization layer and a control-unitary layer. Next, we provide detailed demonstrations on the trainability and expressivity of LCQNN by inserting QNNs into the control unitary layer. Our results showcase theoretical scaling on the gradient variance for different control QNNs, quantitatively connecting the trainability with system size and the number of control qubits. Furthermore, the combination of $k$-local systems is investigated, demonstrating that the system dimension $k$ predominates the variance of the cost function. Finally, we extend our investigation towards group action, showcasing that our LCQNN can mitigate or even avoid BP, considering structured observables.Our contributions are multi-fold:
\vspace{-1ex}
\begin{itemize}
    \item \textbf{Framework Design}: We propose LCQNN, a novel QNN framework that systematically balances expressivity and trainability. It is the first to use a learnable linear combination of parameterized unitaries to explicitly manage this trade-off, moving beyond ad-hoc architectural heuristics.
    \item \textbf{Theoretical Guarantees}: 
    \begin{enumerate}
        \item We provide a rigorous theoretical analysis of trainability for the LCQNN framework. We prove that its cost function's gradient variance scales favorably, inversely with the number of combined QNNs ($L$).
        \item We demonstrate the versatility of LCQNN by restricting it to the $k$-local scenario, showing that the scale of the local QNN, determines the trainability of the LCQNN framework. This proposition implies that LCQNN's structure can be tailored to specific problems to guarantee trainability while retaining non-trivial quantum features.
        \item We generate the LCQNN framework for the group action setting, offering novel perspectives for the design of circuit structures within the geometric quantum machine learning paradigm.
    \end{enumerate}   
    \item \textbf{Practical Validation}: We demonstrate the correctness of the theoretical framework through gradient analysis experiments and the model's potential for practical application by testing it on real-world datasets.
\end{itemize}

%%%%%%%%%%%%%%%%%%%%%%%%%%%%%%%%%%%%%%%%%%%%%%%%%%%%%%%%%%%%%%%%%%%%%%%%%%%
%%%%%%%%%%%%%%%%%%%%%%%%%%%%%%%%%%%%%%%%%%%%%%%%%%%%%%%%%%%%%%%%%%%%%%%%%%%%

\section{Method}\label{sec:lcqnn}
\subsection{Preliminaries and Notations}
%%%%%%%%%%%%%%%%%%%%%%%%%%%%%%%%%%%%%%%%%%%%%%%%%%%%%%%%%%%%%%%%%%%%%%%%%%%
In quantum computing, the basic unit of quantum information is a quantum bit or qubit. A single-qubit pure state is described by a unit vector in the Hilbert space $\mathbb{C}^2$, which is commonly written in Dirac notation $\ket{\psi} = \alpha\ket{0} + \beta\ket{1}$, with $\ket{0} = (1,0)^T$, $\ket{1} =(0,1)^T$, $\alpha,\beta\in \mathbb{C}$ subject to $|\alpha|^2 + |\beta|^2 = 1$. The complex conjugate of $\ket{\psi}$ is denoted as $\bra{\psi} = \ket{\psi}^\dagger$. The Hilbert space of $N$ qubits is formed by the tensor product ``$\otimes$'' of $N$ single-qubit spaces with dimension $d=2^N$. General mixed quantum states are represented by the density matrix, which is a positive semidefinite matrix $\rho \in \mathbb{C}^{d\times d}$ subject to $\tr[\rho]=1$. 

Quantum gates are unitary matrices, which transform quantum states via matrix-vector multiplication. Common single-qubit rotation gates include $R_x(\theta)=e^{-i\theta X/2}$, $R_y(\theta)=e^{-i\theta Y/2}$, $R_z(\theta)=e^{-i\theta Z/2}$, which are in the matrix exponential form of Pauli matrices,
\begin{equation}
    X = \begin{pmatrix}
        0 & 1 \\ 1 & 0
    \end{pmatrix},\quad\quad
    Y = \begin{pmatrix}
        0 & -i \\ i & 0 \\
    \end{pmatrix},\quad\quad
    Z = \begin{pmatrix}
        1 & 0 \\ 0 & -1 \\
    \end{pmatrix}.
\end{equation}
Common two-qubit gates include controlled-X (or CNOT) gate $C\text{-}X=I\oplus X$ ($\oplus$ is the direct sum), which can generate quantum entanglement among qubits.

%%%%%%%%%%%%%%%%%%%%%%%%%%%%%%%%%%%%%%%%%%%%%%%%%%%%%%%%%%%%%%%%%%%%%%%%%%%%
\subsection{LCQNN Framework}\label{sec:lcqnn_framework}

Quantum neural networks (QNNs) require balancing two competing properties: expressivity and trainability. Expressivity refers to the range of quantum states that a QNN can achieve. A QNN with high expressivity can generate a wide variety of quantum states, a prerequisite for solving complex problems. Trainability, on the other hand, measures how effectively a QNN can be trained using classical algorithms, such as gradient-based methods. A trainable QNN exhibits non-vanishing gradients, enabling efficient convergence to optimal parameters. While deep QNNs are typically favored for their strong expressivity, prior research has revealed a fundamental trade-off: strong expressivity will lead to poor trainability due to the BP. In this context, it is difficult to find a flexible method for designing QNNs that strikes a balance between the expressivity and the trainability of QNN to efficiently complete the task. 

Motivated by the trade-off between trainability and expressivity, and the requirements for task-based structural design in QNNs, we propose a novel framework termed linear combination of quantum neural networks (LCQNN), which is inspired by the linear combination of unitaries (LCU) technique. The LCQNN architecture operates on an $(m+n)$-qubit system and comprises two parts: \textit{coefficient layer} , denoted as $V(\alpha)$, and a \textit{control unitary layer} referred to as ${\rm C}\text{-}U$. The coefficient layer $V(\alpha)$ acts on the first $m$-qubit of LCQNN, while the unitaries in the control unitary layer are applied to the last $n$-qubit. The whole structure of LCQNN is depicted at the bottom of Fig.~\ref{fig:lcqnn} and can be expressed as 
\begin{equation}
    W(\bm \alpha,\bm \theta)=\prod_{j=0}^{L-1}\left({\rm C}\text{-}U_j\right)\cdot(V\otimes I^{\otimes n}), \;(L\in\mathbb{R}_+),
\end{equation}
where ${\bm\alpha}=\{\alpha^0_1,\alpha^1_{1},\alpha^1_{2},\cdots,\alpha^{m-1}_1,\alpha^{m-1}_{2^{m-1}}\}$ and ${\bm \theta}=\{\theta_1,\cdots,\theta_{2^m}\}$ are two parameter sets. 
We can tune parameters in each layer and adjust the number of unitary $U$ to govern the expressivity and trainability. Note that the coefficient layer can be constructed by rotation gates, as shown at the top of Fig.~\ref{fig:lcqnn}. Additionally, if the control-unitary layer is substituted with control-permutation operators, this model degenerate into the QSF framework that is a unified framework for achieving nonlinear functions of quantum states~\cite{yao2024nonlinear}.

In the following, we conduct a more detailed analysis of the trainability and expressivity of LCQNN in terms of the gradient variance and state generation, respectively.

\begin{figure}[t]
    \centering
    \includegraphics[width=\linewidth]{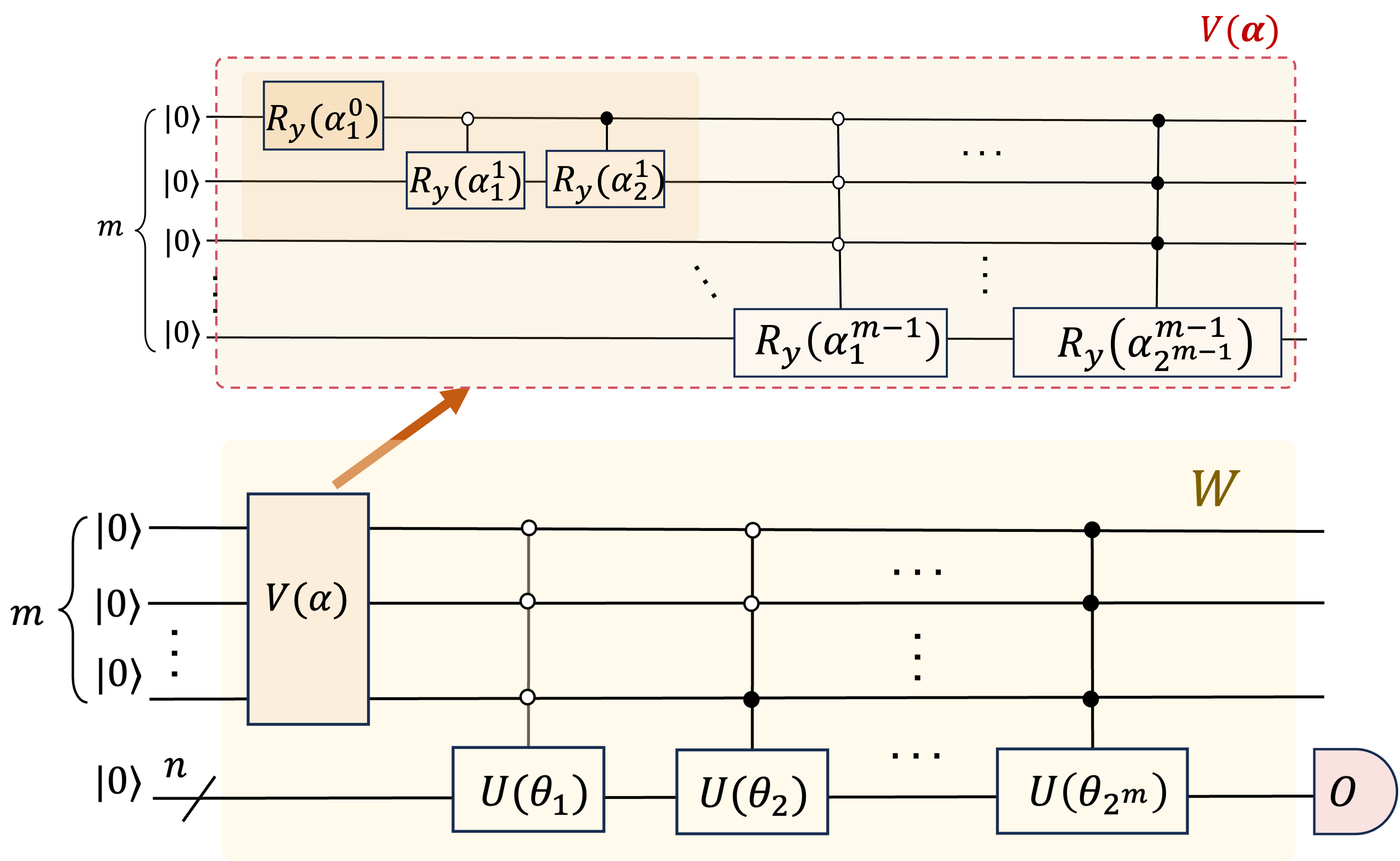}
    \caption{\textbf{Illustration of LCQNN.} $V(\bm\alpha)$ is the coefficient layer, which can be constructed by the $R_y$ rotation gate. The control-unitary layer consists of at most $2^m$ unitary for $m$ control qubits.}
    \label{fig:lcqnn}
\end{figure}

%%%%%%%%%%%%%%%%%%%%%%%%%%%%%%%%%%%%%%%%%%%%%%%%%%
%%%%%%%%%%%%%%%%%%%%%%%%%%%%%%%%%%%%%%%%%%%%%%%%%%%
\subsection{Trainability and Expressivity of LCQNN}\label{sec:lcqnn_trainability_1}
In this section, we analyze the expressivity of LCQNN in generating an arbitrary $m+n$-qubit pure state, while also evaluating its trainability by calculating the variance of the cost function, given that the $U(\theta_j)$ in LCQNN is sampled from unitary $2$-design ensembles, and $V(\alpha)$ is constructed by a $R_y$ rotation gates.

Trainability is one of the most critical concerns in QNNs. When a QNN forms a unitary 2-design, the cost gradients vanish exponentially with system size, severely hampering optimization.  
To formalize the trainability of LCQNN, we consider the cost function $C(\bm\alpha,\bm\theta)$ as the expectation value of an observable $O$ with respect to an initial state $\ket{0}^{\otimes (m+n)}$, and a LCQNN $W$ on $m+n$ qubits. We examine the gradient component $\partial C(\bm\alpha,\bm\theta)$ with respect to the variational parameter $\bm\alpha$ and $\bm\theta$. Here, we assume that the unitary $U(\theta_i)$ in the control-unitary layer are sampled from local unitary $2$-designs. Under this assumption, the trainability of LCQNN can be improved by reducing the number of unitaries in the control-unitary layer. Theorem~\ref{thm:trade-off_exp_bp} below demonstrates the expressivity of LCQNN in generating any $m+n$-qubit pure states, and establishes an explicit relationship between its trainability and the number of unitary in the control-unitary layer. 

% \begin{tcolorbox}
\begin{theorem}[Trainability and expressivity]\label{thm:trade-off_exp_bp}
    Denote $L\in\mathbb{R}_+$ and $W:=\prod_{j=0}^{L-1}{\rm C}\text{-}U_j\cdot(V\otimes I^{\otimes n})$, which is described in Fig.\ref{fig:lcqnn}. Then, the LCQNN $W$ can achieve any pure state $\ket{\psi}:=\oplus_{j=0}^{2^m-1}\sqrt{p_j}\ket{\psi_j}\in(\mathbb{C}^2)^{\otimes m+n}$, where $p_j=0$ for $j\geq L$. For given any traceless observable $O$ on system $(\mathbb{C}^2)^{\otimes n}$, we have $\mathbb{E}_{\bm \alpha,\bm \theta}[\partial C(\bm \alpha,\bm \theta)]=0$ and,
    \begin{equation}
        \begin{aligned}
            {\rm Var}_{\bm \alpha,\bm \theta}[\partial C(\bm \alpha,\bm \theta)]\in\cO\left(\frac{{\rm Poly}(n)}{L2^{n}}\right),
        \end{aligned}
    \end{equation}
    where $C(\bm \alpha,\bm \theta):=\bra{0}^{\otimes m+n}W(I\otimes O)W^\dagger\ket{0}^{\otimes m+n}$ denotes the cost function.  
\end{theorem}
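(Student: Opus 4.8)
The plan is to first unfold the circuit into its linear-combination-of-unitaries form and then reduce the cost to a convex combination of ordinary single-QNN costs, at which point both claims follow from standard $R_y$ state-preparation facts and known barren-plateau estimates for $2$-designs. Writing $\ket{\Psi}=W\ket{0}^{\otimes(m+n)}$, I would first observe that $V\otimes I^{\otimes n}$ turns $\ket{0}^{\otimes(m+n)}$ into $\big(\sum_{j=0}^{2^m-1}\sqrt{p_j}\ket{j}\big)\otimes\ket{0}^{\otimes n}$, where the $p_j$ are the output probabilities of the $R_y$ coefficient layer, and that the controlled-unitary product $\prod_j {\rm C}\text{-}U_j$ then maps this to $\ket{\Psi}=\sum_{j=0}^{L-1}\sqrt{p_j}\ket{j}\otimes U_j\ket{0}^{\otimes n}$, the sum truncating at $L$ because only $L$ control unitaries are present (so $p_j=0$ for $j\ge L$). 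The expressivity claim is then immediate: a cascade of $R_y$ rotations realizes any real nonnegative amplitude vector $(\sqrt{p_0},\dots,\sqrt{p_{L-1}})$ on the control register, while each $U_j$ can send $\ket{0}^{\otimes n}$ to an arbitrary $n$-qubit pure state $\ket{\psi_j}$; hence $\ket{\Psi}$ ranges over exactly the set $\oplus_{j=0}^{2^m-1}\sqrt{p_j}\ket{\psi_j}$ with $p_j=0$ for $j\ge L$.

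For trainability, the key simplification is that the observable acts as the identity on the control register, so that
\begin{equation}
    C(\bm\alpha,\bm\theta)=\langle\Psi|(I\otimes O)|\Psi\rangle=\sum_{j=0}^{L-1}p_j\,\bra{0}^{\otimes n}U_j^\dagger O U_j\ket{0}^{\otimes n}=\sum_{j=0}^{L-1}p_j\,c_j,
\end{equation}
the cross terms vanishing because the control states $\ket{j}$ are orthogonal. Since the parameter sets $\bm\alpha$ (entering only the $p_j$) and $\bm\theta$ (entering only the $U_j$) are disjoint, the averages over $\bm\alpha$ and $\bm\theta$ factorize. For a component $\partial_\theta$ of $\bm\theta$ lying in $U_k$ one has $\partial_\theta C=p_k\,\partial_\theta c_k$, while for a component $\partial_\alpha$ of $\bm\alpha$ one has $\partial_\alpha C=\sum_j(\partial_\alpha p_j)\,c_j$. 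Because $O$ is traceless, the single-design average gives $\mathbb{E}_{U_j}[c_j]=\tr(O)/2^n=0$ and likewise $\mathbb{E}_{U_k}[\partial_\theta c_k]=0$, so in both cases $\mathbb{E}_{\bm\alpha,\bm\theta}[\partial C]=0$.

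The variance then splits cleanly. Using independence of distinct $U_j$ together with $\mathbb{E}[c_j]=0$, every cross term drops and I obtain, in the $\bm\theta$ case, ${\rm Var}[\partial_\theta C]=\mathbb{E}_{\bm\alpha}[p_k^2]\,\mathbb{E}_{\bm\theta}[(\partial_\theta c_k)^2]$, and in the $\bm\alpha$ case ${\rm Var}[\partial_\alpha C]=\sum_j\mathbb{E}_{\bm\alpha}[(\partial_\alpha p_j)^2]\,\mathbb{E}_{\bm\theta}[c_j^2]$. The factors $\mathbb{E}[(\partial_\theta c_k)^2]$ and $\mathbb{E}[c_j^2]$ are exactly the single-QNN gradient/variance quantities for a traceless observable under a (local) $2$-design, known to scale as $\cO({\rm Poly}(n)/2^n)$~\cite{Mcclean2018barren,Cerezo2021cost}; I would simply invoke those estimates. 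It remains to bound the coefficient factors by $\cO(1/L)$: for the $\bm\theta$ case I would use $p_k^2\le p_k$ together with the simplex constraint $\sum_k p_k=1$ and the permutation symmetry of the $R_y$ preparation to get $\mathbb{E}[p_k^2]\le\mathbb{E}[p_k]=1/L$; for the $\bm\alpha$ case I would use $\sum_j\partial_\alpha p_j=0$ and the binary-tree structure of the $R_y$ circuit to show $\sum_j\mathbb{E}[(\partial_\alpha p_j)^2]=\cO(1/L)$. Multiplying the two factors yields the advertised ${\rm Var}[\partial C]\in\cO({\rm Poly}(n)/(L2^n))$.

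I expect the main obstacle to be this last estimate on the coefficient layer, i.e.\ controlling the moments $\mathbb{E}[p_k^2]$ and $\sum_j\mathbb{E}[(\partial_\alpha p_j)^2]$ induced by the $R_y$ state-preparation distribution on the probability simplex. The crude bound $p_k^2\le p_k$ already suffices for an $\cO(1/L)$ upper bound on $\mathbb{E}[p_k^2]$ once $\mathbb{E}[p_k]=1/L$ is fixed, but establishing the analogous $1/L$ scaling for the derivative moments requires tracking how a single rotation angle at a given level of the tree influences the block of probabilities beneath it, and verifying that the uniform angle distribution makes the relevant second moments sum to $\cO(1/L)$ rather than $\cO(1)$. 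Care is also needed to confirm that $\mathbb{E}[p_k]=1/L$, or at least $\mathbb{E}[p_k]\le\cO(1/L)$, genuinely holds for the concrete $R_y$ layer rather than being assumed by symmetry.
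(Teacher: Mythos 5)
Your proposal is correct and follows essentially the same route as the paper's proof: prepare the superposition with the $R_y$ tree, reduce the cost to the convex combination $C=\sum_j p_j\,\bra{0}^{\otimes n}U_j^\dagger O U_j\ket{0}^{\otimes n}$, use independence of $\bm\alpha$ and $\bm\theta$ to factorize the second moments, invoke the standard $2$-design barren-plateau estimate $\cO({\rm Poly}(n)/2^n)$ for the unitary factor, and show the coefficient moments contribute the extra $\cO(1/L)$. Your treatment is in fact somewhat more explicit than the paper's (which asserts $\mathbb{E}_{\bm\alpha}[p_j^2]\in\cO(1/L)$ by ``direct integration''), and the points you flag as needing care are exactly the ones the paper glosses over.
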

% The detailed proof can be found in Appendix~\ref{appendix:detailed_proof}.
\begin{proof}
        Firstly, we are going to demonstrate the expressivity. Circuit $V$ aims to generate a superposition state based on the coefficients $\{p_j\}$, a probability distribution satisfying $\sum_{j=0}^{2^t-1} p_j=1$, where denotes $L=2^t$ (w.l.o.g.). In specific, after applying the constructed circuit $V(\bm\alpha)$ shown in Fig.~\ref{fig:lcqnn}, we can obtain the target state:
    \begin{equation}
        \begin{aligned}
            V(\bm\alpha)\ket{0}^{\otimes m}=\sum_{j=0}^{L-1}\sqrt{p_j(\bm\alpha)}\ket{j},
        \end{aligned}
    \end{equation}
    where $p_j(\bm\alpha):=\prod_{k=1}^t[\cos^2(\alpha_{j_k})]^{1-j_k}[\sin^2(\alpha_{j_k})]^{j_k}$, $j=\sum_{k=1}^t j_k 2^{t-k}$, and trainable parameters $\alpha_{j_k}\in\bm\alpha:=\{\alpha_0,\cdots,\alpha_{L-2}\}$, and $\ket{j}$ is the basis on the Hilbert space $(\mathbb{C}^2)^{\otimes m}$. As a result, after applying the whole circuit $W$, one can obtain the output state as follows:
    \begin{equation}\label{eq:pqc_state}
        \begin{aligned}
            W\ket{0}^{\otimes m+n}&=\prod_{j=0}^{2^t-1}C\text{-}U_j(\bm \theta_j)\cdot\bigoplus_{j=0}^{2^t-1}\sqrt{p_j(\bm\alpha)}\ket{0}^{\otimes n}\\
            &=\bigoplus_{j=0}^{2^t-1}\sqrt{p_j(\bm\alpha)}U_j(\bm \theta_j)\ket{0}^{\otimes n},
        \end{aligned}
    \end{equation}
    where $\bm \theta_j\in \bm \theta=\{\bm \theta_0,\cdots,\bm \theta_{L-1}\}$. Notice that $U_j$ is universal with some parameter $\bm\theta_j$. Thus, there exist parameters $\bm\theta$ and $\bm\alpha$, such that the state in Eq.~\eqref{eq:pqc_state} can achieve any target pure state $\ket{\psi}$ as we assumed, which demonstrates its expressivity.
    
    Secondly, we turn to investigate the trainability by focusing on these two terms $\mathbb{E}_{\bm \alpha,\bm \theta}[\partial C(\bm \alpha,\bm \theta)]$ and $\text{Var}_{\bm \alpha,\bm \theta}[\partial C(\bm \alpha,\bm \theta)]$. We further derive the cost function, which can be written in the following form:
    \begin{equation}\label{eq:loss_function}
        \begin{aligned}
            C(\bm\alpha,\bm\theta):=\sum_{j=0}^{2^t-1}p_j(\bm\alpha)\bra{0}^{\otimes n}U^\dagger_j(\bm\theta_j)OU(\bm\theta_j)\ket{0}^{\otimes n}.
        \end{aligned}
    \end{equation}
    Notice that the partial derivative with respect to $\alpha_j$ satisfies $\partial_{\alpha_j} C(\bm \alpha,\bm \theta)=\partial_{\alpha_j}p_j(\bm\alpha)\bra{0}^{\otimes n}U^\dagger_j(\bm\theta_j)OU(\bm\theta_j)\ket{0}^{\otimes n}$, which means $\mathbb{E}_{\bm \alpha,\bm \theta}[\partial_{\alpha_j} C(\bm \alpha,\bm \theta)]=0$. Similarly, we have $\mathbb{E}_{\bm \alpha,\bm \theta}[\partial_{\theta_j} C(\bm \alpha,\bm \theta)]=0$, where $\theta_j\in\mathbb{R}$ denotes the trainable parameter, an element in the vector $\bm\theta_j\in \mathbb{R}^{L}$. Therefore, we have $\mathbb{E}_{\bm \alpha,\bm \theta}[\partial C(\bm \alpha,\bm \theta)]=0$. Next, we study this term $\mathbb{E}_{\bm \alpha,\bm \theta}[\partial^2 C(\bm \alpha,\bm \theta)]$. Specifically, we have the following observations:
    \begin{equation}
        \begin{aligned}
            \mathbb{E}_{\bm \alpha}[p^2_{j}(\bm\alpha)]\in\cO\left(\frac{1}{2^t}\right),\quad\mathbb{E}_{\bm \alpha}[\partial^2_{\alpha_j}p_{j}(\bm\alpha)]\in\cO\left(\frac{1}{2^t}\right),
        \end{aligned}
    \end{equation}
    which can be calculated by direct integration of $\bm\alpha$. Then, we consider the fact that $\bm\alpha$ and $\bm\theta$ are independent, and that $\mathbb{E}_{\bm\theta}[\partial^2_{\theta_j}\bra{0}^{\otimes n}U^\dagger_j(\bm\theta_j)OU(\bm\theta_j)\ket{0}^{\otimes n}]\in\cO(\text{Poly(n)}/2^n)$, raising from the standard BP issue. As a result, we obtain $\mathbb{E}_{\bm \alpha,\bm \theta}[\partial^2_{\theta_j} C(\bm \alpha,\bm \theta)]$, i.e.,
    \begin{equation}
        \begin{aligned}
            \mathbb{E}_{\bm\alpha}[p^2_{j}(\bm\alpha)]\cdot\mathbb{E}_{\bm\theta}[\partial^2_{\theta_j}\bra{0}^{\otimes n}U^\dagger_j(\bm\theta_j)OU(\bm\theta_j)\ket{0}^{\otimes n}].
        \end{aligned}
    \end{equation}
    Similarly, we also have $ \mathbb{E}_{\bm \alpha,\bm \theta}[\partial^2_{\alpha_j} C(\bm \alpha,\bm \theta)]\in\cO(\text{Poly}(n)/(L2^n))$. It means that $\text{Var}_{\bm \alpha,\bm \theta}[\partial C(\bm \alpha,\bm \theta)]\in\cO(\text{Poly}(n)/(L2^{n)})$, which completes this proof.
\end{proof}

% \begin{shaded}
\begin{remark}\label{thm:fully_contrl}
   If $L=2^{m}$, the parameted circuit $W$ can achieve any pure state on the system $(\mathbb{C}^2)^{\otimes m+n}$ demonstrating its expressivity, and the trainability suffers from the BP issue over the whole system, i.e., $\text{Var}_{\bm \alpha,\bm \theta}[\partial C(\bm \alpha,\bm \theta)]\in\cO\left(\frac{\text{Poly}(n)}{2^{(m+n)}}\right)$. 
\end{remark}
% \end{shaded}

Theorem~\ref{thm:trade-off_exp_bp} establishes an explicit trade-off between expressivity and trainability in quantum neural networks (QNNs) through carefully designed circuit architectures. Notably, this quantified relationship may facilitate applications in quantum architecture search (QAS)~\cite{martyniuk2024quantum,zhang2024predicting,zhu2025scalable}.
%%%%%%%%%%%%%%%%%%%%%%%%%%%%%%%%%%%%%%%%%%%%%%%%%%%%%
%%%%%%%%%%%%%%%%%%%%%%%%%%%%%%%%%%%%%%%%%%%%%%%%%%%%%%

\begin{figure}[t]
    \centering
    \includegraphics[width=\linewidth]{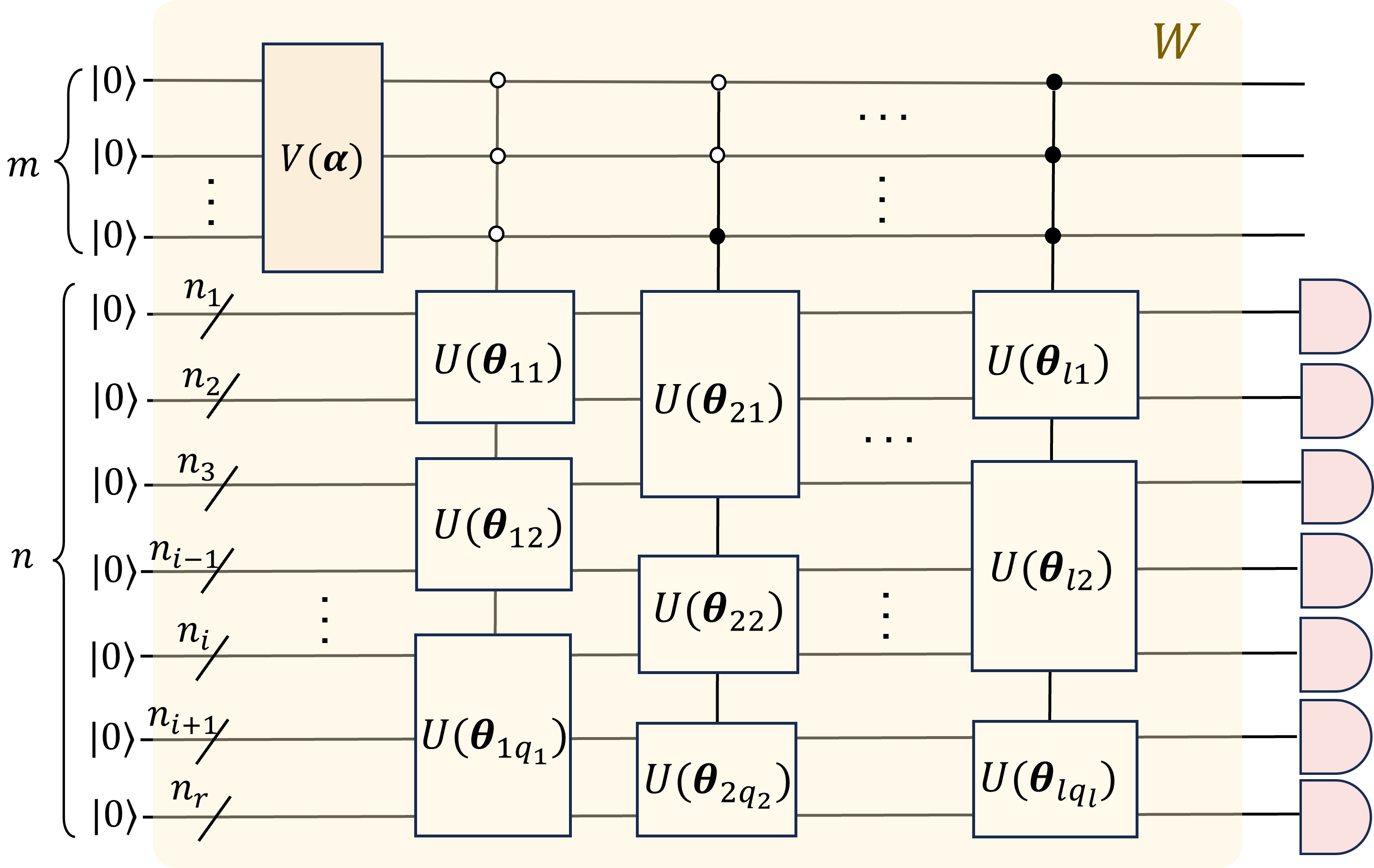}
    \caption{\textbf{LCQNN with $k$-local unitary.} The $U(\bm\theta_{jt})$ in the control-unitary layer acts on $k$-local systems and measurement operators act on $n_i$ systems, respectively. The coefficient layer $V(\bm\alpha)$ can be constructed by a control $R_y$ gate.}
    \label{fig:k-local}
\end{figure}

\subsection{LCQNN with $k$-local Unitary}\label{sec:lcqnn_trainability_2}
As illustrated in Fig.~\ref{fig:lcqnn}, the assumption of global universality over the entire $n$-qubit subspace $(\mathbb{C}^2)^{\otimes n}$ grants quantum neural networks (QNNs) excessive expressivity, which fundamentally contributes to barren plateau (BP) issues. To mitigate this phenomenon, we propose an architecture employing $k$-local unitaries applied to individual subsystems rather than to the full $n$-qubit system. The resulting $k$-local LCQNN structure is depicted in Fig.~\ref{fig:k-local}, yielding the following key property.

\begin{proposition}\label{cor:k-local}
     The LCQNN with $k$-local unitary is defined as $W:=\prod_{j=0}^{L-1}\left[{\rm C}\text{-}\left(\bigotimes_tU_{jt}\right)\right]\cdot(V\otimes I^{\otimes n})$, where $U_{jt}$ is a $k$- local unitary $2$-design, as shown in Fig.~\ref{fig:k-local}.
    Then, the LCQNN $W$ can achieve the pure state $\ket{\psi}:=\bigoplus_{j=0}^{2^m-1}\left(\sqrt{p_j}\ket{\psi_j}\bigotimes_i\ket{\phi_{jt}}\right)\in(\mathbb{C}^2)^{\otimes m+n}$, with $p_j=0$ for $j\geq L$. Furthermore, the variance of the cost gradient scales with the number of qubits as 
    \begin{equation}
        \begin{aligned}
            {\rm Var}_{\bm \alpha,\bm \theta}[\partial C(\bm \alpha,\bm \theta)]\in\cO\left(\frac{{\rm Poly}(k)}{L2^{k}}\right).
        \end{aligned}
    \end{equation}
\end{proposition}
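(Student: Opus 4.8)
The plan is to follow the two-part structure of the proof of Theorem~\ref{thm:trade-off_exp_bp}, replacing the global control unitary $U_j$ acting on the full $n$-qubit register by the tensor product $\bigotimes_t U_{jt}$ of $k$-local blocks, and using that the measured observable is supported on a single $k$-qubit block (Fig.~\ref{fig:k-local}). Expressivity transfers almost verbatim; the trainability bound rests on localizing the gradient variance to one $k$-local block, so that the relevant Hilbert-space dimension becomes $2^k$ rather than $2^n$.

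First I would establish expressivity. The coefficient layer still prepares $V(\bm\alpha)\ket{0}^{\otimes m}=\sum_{j=0}^{L-1}\sqrt{p_j(\bm\alpha)}\ket{j}$, and controlling on $\ket{j}$ applies $\bigotimes_t U_{jt}(\bm\theta_{jt})$ to $\ket{0}^{\otimes n}$, giving
\begin{equation}
    W\ket{0}^{\otimes m+n}=\bigoplus_{j=0}^{L-1}\sqrt{p_j(\bm\alpha)}\bigotimes_t U_{jt}(\bm\theta_{jt})\ket{0}^{\otimes k}.
\end{equation}
Because each $U_{jt}$ is universal on its own $k$ qubits, the parameters can be tuned so that $U_{jt}\ket{0}^{\otimes k}=\ket{\phi_{jt}}$, reaching precisely the advertised direct-sum-of-products family with $p_j=0$ forced for $j\ge L$. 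This already exposes the trade-off: the reachable set collapses from all $(m+n)$-qubit pure states to direct sums of product states across the $k$-local partition.

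Next I would treat trainability. Writing $C(\bm\alpha,\bm\theta)=\sum_{j=0}^{L-1}p_j(\bm\alpha)\bra{0}^{\otimes n}(\bigotimes_t U_{jt}^\dagger)O(\bigotimes_t U_{jt})\ket{0}^{\otimes n}$, the key observation is that for an observable $O=O_{\rm loc}\otimes I$ supported on one block $t_0$ the off-support unitaries cancel, $\prod_{t\ne t_0}\bra{0}U_{jt}^\dagger U_{jt}\ket{0}=1$, so every summand reduces to the genuinely $k$-qubit expectation $\bra{0}^{\otimes k}U_{jt_0}^\dagger O_{\rm loc}U_{jt_0}\ket{0}^{\otimes k}$. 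Hence any parameter in a block $t\ne t_0$ has identically zero gradient, and the mean $\mathbb{E}_{\bm\alpha,\bm\theta}[\partial C]=0$ follows exactly as before from the first moment of the $k$-local $2$-design. For the variance I would use independence of $\bm\alpha$ and $\bm\theta$ to factorize each contribution into an $\bm\alpha$-factor, either $\mathbb{E}_{\bm\alpha}[p_j^2]$ or $\mathbb{E}_{\bm\alpha}[(\partial_{\alpha}p_j)^2]$, both $\cO(1/L)$ by direct integration since $L=2^t$, times a $\bm\theta$-factor equal to the $k$-qubit second moment $\cO({\rm Poly}(k)/2^k)$ of the (derivative of the) local expectation, which is the standard barren-plateau estimate on $2^k$ dimensions. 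Multiplying yields ${\rm Var}_{\bm\alpha,\bm\theta}[\partial C]\in\cO({\rm Poly}(k)/(L2^k))$.

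The main obstacle is making the localization rigorous rather than heuristic: I must verify, through second-moment (Weingarten) integration over the $k$-local $2$-designs, that averaging out each $U_{jt}$ with $t\ne t_0$ produces exactly the normalization-preserving identity contractions, so that the dimension entering the variance is $2^k$ and no hidden $2^n$ factor creeps in through the many unit-norm blocks. A secondary point is careful bookkeeping of the ${\rm Poly}(k)$ prefactor, which should absorb the number of parameters within a single $k$-qubit block, together with checking that summing the per-block second moments over the (at most $L$) active blocks preserves, rather than cancels, the $1/L$ gain supplied by $\mathbb{E}_{\bm\alpha}[p_j^2]\in\cO(1/L)$.
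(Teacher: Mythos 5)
Your proposal follows essentially the same route as the paper's own proof: expressivity via the coefficient layer plus universality of each $k$-local block, and trainability by reducing each summand of the cost to a $k$-qubit expectation and multiplying the $\cO(1/L)$ coefficient moment by the standard $\cO({\rm Poly}(k)/2^k)$ barren-plateau estimate on $2^k$ dimensions. If anything, you are more explicit than the paper (which simply invokes ``the BP theorem with respect to the $k$ systems'' and defers to Theorem~\ref{thm:trade-off_exp_bp}) about verifying that the off-support $2$-design averages contract to identity so no hidden $2^n$ factor appears.
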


The detailed proof can be found in Appendix~\ref{appendix:detailed_proof}. The variance of the cost function comprises two distinct contributions: the \textit{coefficient layer} $V(\bm\alpha)$ and the \textit{control-unitary layer} $\{{\rm C}\text{-}U_{jt}\}$. For the control-unitary layer, Proposition~\ref{cor:k-local} demonstrates that employing local unitaries and appropriate observable mitigates barren plateau issues at the cost of reduced expressivity.

It is important to note that, within the context of classical machine learning, each measurement operator $O_{\bm\theta}:=U^\dagger(\bm\theta)OU(\bm\theta)$ can be regarded as a feature extraction process that retrieves information from the quantum state $\rho$. As illustrated in Fig.~\ref{fig:framework}, the traditional QNNs correspond to a robust feature extraction process (universal $U(\bm\theta)$). However, due to the constraints imposed by the BP issue, the stronger the feature extraction capability, the more difficult it becomes to train the QNNs, leading to scalability challenges for traditional QNN models. Inspired by the progressive feature extraction mechanisms observed in classical machine learning models, the proposed LCQNN in this paper achieves specific machine learning tasks by integrating relatively weaker feature extractors ($k$-local unitary $\otimes_tU(\bm\theta_{jt})$) while ensuring the trainability of the model. This approach effectively balances expressivity and trainability, thereby identifying a task-based QNN architecture.

\begin{figure*}[t]
    \centering
    \includegraphics[width=0.7\linewidth]{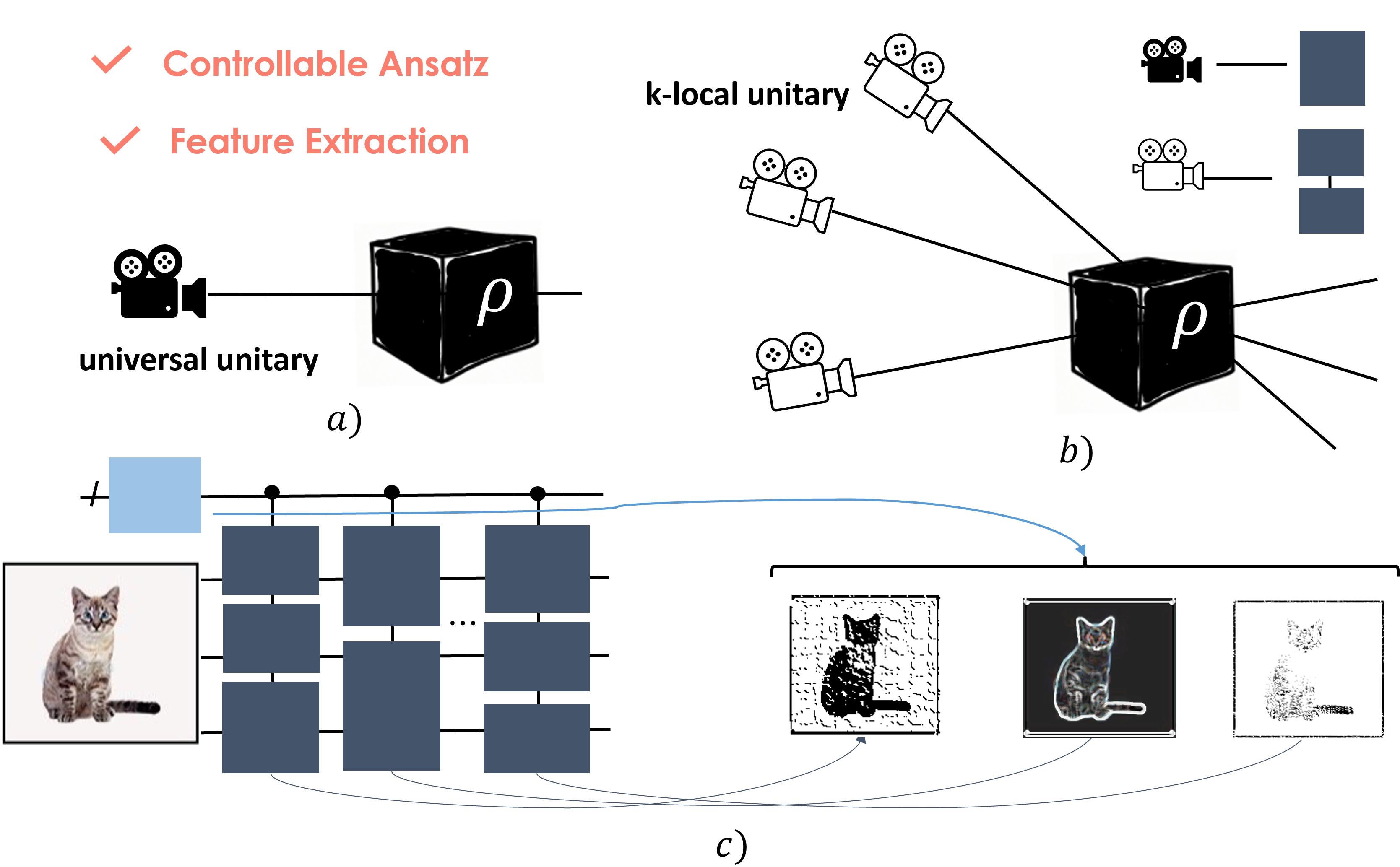}
    \caption{Comparison between LCQNN and traditional QNNs. a) and b) denote the conventional QNN and the LCQNN, respectively. c) Illustrates the framework of feature extraction using LCQNN. Specifically, each local QNN can be regarded as a feature filter, which, through the coefficient layer, combines different features to achieve the purpose of pattern recognition. In comparison with traditional QNN models, LCQNN not only possesses the capability to accomplish tasks but also demonstrates trainability.}
    \label{fig:framework}
\end{figure*}

%%%%%%%%%%%%%%%%%%%%%%%%%%%%%%%%%%%%%%%%%%%%%%%%%%%%%%%%%%
%%%%%%%%%%%%%%%%%%%%%%%%%%%%%%%%%%%%%%%%%%%%%%%%%%%%%%%%%%
\subsection{LCQNN over Group Action}\label{sec:lcqnn_trainability_group}
Next, we investigate the general LCQNN, that is, a linear combination of parameterized unitary actions $\textbf{R}(G)$ of a finite group or compact Lie group $G$.

Recall that the unitary representation $\textbf{R}(G)$ can be rewritten as a direct sum of irreducible representations (\textit{irreps}) by the isotypical decomposition. Denote the Hilbert space as $\cH:=(\mathbb{C}^2)^{\otimes N}$, we have
\begin{equation}
    \begin{aligned}
        \cH\cong\bigoplus_{\mu\in \hat{G}}\cQ_{\mu}\otimes\mathbb{C}^{m_\mu},
    \end{aligned}
\end{equation}
where denotes the set of labels of irreducible representations appearing in $\textbf{R}$ by $\hat{G}$, $\cQ_\mu$ is the irreducible sub-space, and $m_\mu$ expresses the multiplicity of the irrep $\cQ_\mu$. Here, we denote the dimensions of $\cQ_\mu$ as $d_\mu$. The change of basis refers to $U_{\text{transform}}$.

Suppose the observable $O$ can be written as a block-diagonal form under the basis transform, i.e.,
\begin{equation}\label{eq:basis_transform}
    \begin{aligned}
        U_{\text{transform}}^\dagger OU_{\text{transform}}=\bigoplus_{\mu\in\hat{G}} O_\mu,
    \end{aligned}
\end{equation}
where $O_\mu\in \text{End}(\cH_\mu\otimes \mathbb{C}^{m_\mu})$ is an operator in the sub-sapce $\cH_\mu\otimes \mathbb{C}^{m_\mu}$. Furthermore, for a given pure state $\ket{\psi}$ in Hilbert space $\cH$, that can be decomposed into the following form:
\begin{equation}\label{eq:target_state_sym}
    \begin{aligned}
        \ket{\psi}=\bigoplus_{\mu\in \hat{G}}c_\mu\ket{\psi_\mu},
    \end{aligned}
\end{equation}
where $\ket{\psi_\mu}$ denotes the bipartite space in the space $\cQ_\mu\otimes \mathbb{C}^{m_\mu}$ and the amplitude $c_\mu\geq0$ satisfies $\sum_\mu c_\mu^2=1$. It means that the cost function can be written as a simplified form as follows:
\begin{equation}\label{eq:symmtry_cost}
    \begin{aligned}
        C:=\langle\psi|O|\psi\rangle=\sum_{\mu\in \hat{G}}c_\mu^2\langle \psi_\mu|O_\mu|\psi_\mu\rangle.
    \end{aligned}
\end{equation} 
Similar to the cost function in Eq.~\eqref{eq:loss_function}, suppose the coefficients $c_\mu$ and state $\ket{\psi_\mu}$ in irreducible space are parameterized via $\bm\alpha$ and $\bm \theta$. Then, we denote the above cost function as $C(\bm\alpha,\bm\theta)$, and have the following observation:

% \begin{tcolorbox}
\begin{theorem}\label{thm:trade_off_sym}
    Let $H\subset \hat{G}$ be a subset of irrep labels. Then, we have $\mathbb{E}_{\bm \alpha,\bm \theta}[\partial C(\bm \alpha,\bm \theta)]=0$, and
    \begin{equation}
        \begin{aligned}
            {\rm Var}_{\bm \alpha,\bm \theta}[\partial C(\bm \alpha,\bm \theta)]\in\cO\left(\frac{{\rm Poly}(\log d_{\rm max})}{Ld_{\rm max}}\right),
        \end{aligned}
    \end{equation}
    where $d_{\rm max}:=\max_{\mu\in H} d_\mu m_\mu$ and $L:=|H|$.
\end{theorem}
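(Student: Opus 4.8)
The plan is to mirror the structure of the proof of Theorem~\ref{thm:trade-off_exp_bp}, replacing the ``fixed'' basis $\ket{j}$ labelling the control register by the isotypical labels $\mu\in H$, and the single-block unitaries $U_j(\bm\theta_j)$ acting on $(\mathbb{C}^2)^{\otimes n}$ by unitaries $U_\mu(\bm\theta_\mu)$ acting inside each irreducible block $\cQ_\mu\otimes\mathbb{C}^{m_\mu}$ of dimension $d_\mu m_\mu$. The decomposition in Eq.~\eqref{eq:symmtry_cost} already writes the cost as $C(\bm\alpha,\bm\theta)=\sum_{\mu\in H}c_\mu^2(\bm\alpha)\,\langle\psi_\mu(\bm\theta)|O_\mu|\psi_\mu(\bm\theta)\rangle$, which is exactly the group-theoretic analogue of Eq.~\eqref{eq:loss_function} with $p_j\mapsto c_\mu^2$ and the per-block expectation playing the role of $\bra{0}^{\otimes n}U_j^\dagger O U_j\ket{0}^{\otimes n}$. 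The coefficients $\{c_\mu\}_{\mu\in H}$ form a probability distribution over the $L:=|H|$ retained irreps, parametrized by $\bm\alpha$ through an $R_y$-type coefficient layer in exactly the same way as the $\{p_j\}$.

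First I would establish the vanishing of the first moment. Writing $\partial_{\alpha}C=\partial_\alpha\!\big(c_\mu^2(\bm\alpha)\big)\langle\psi_\mu|O_\mu|\psi_\mu\rangle$ and $\partial_{\theta}C=c_\mu^2(\bm\alpha)\,\partial_\theta\langle\psi_\mu|O_\mu|\psi_\mu\rangle$, the product structure and independence of $\bm\alpha,\bm\theta$ let me factor each expectation. For the $\bm\theta$-derivative the inner factor has zero mean because $U_\mu$ is drawn from a $2$-design on the block and $O_\mu$ is (the block-restriction of) a traceless observable, giving $\EE_{\bm\theta}[\langle\psi_\mu|O_\mu|\psi_\mu\rangle]\propto\tr[O_\mu]/(d_\mu m_\mu)$; for the $\bm\alpha$-derivative the coefficient factor integrates to zero by the same parity/periodicity of the $R_y$ angles used in Theorem~\ref{thm:trade-off_exp_bp}. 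Hence $\EE_{\bm\alpha,\bm\theta}[\partial C]=0$.

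Next I would bound the variance, i.e.\ the second moment $\EE[(\partial C)^2]$. Again I split into the $\alpha$- and $\theta$-components and factorize using independence. The coefficient moments behave as $\EE_{\bm\alpha}[c_\mu^4]\in\cO(1/L)$ and $\EE_{\bm\alpha}[(\partial_\alpha c_\mu^2)^2]\in\cO(1/L)$, exactly paralleling the $\cO(1/2^t)$ estimates in the original proof with $L$ in place of $2^t$. The block contribution $\EE_{\bm\theta}[(\partial_\theta\langle\psi_\mu|O_\mu|\psi_\mu\rangle)^2]$ is the standard $2$-design barren-plateau estimate on a space of dimension $D_\mu:=d_\mu m_\mu$, which is $\cO(\mathrm{Poly}(\log D_\mu)/D_\mu)$. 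Taking the worst block $D_\mu\le d_{\rm max}=\max_{\mu\in H}d_\mu m_\mu$ and summing the $L$ terms, the per-term $1/L$ coefficient weight cancels the sum's cardinality, leaving $\mathrm{Var}\in\cO(\mathrm{Poly}(\log d_{\rm max})/(L\,d_{\rm max}))$, as claimed.

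The main obstacle I anticipate is justifying the $2$-design Weingarten estimate \emph{block by block}: the physical unitaries are equivariant group-representation elements $\textbf{R}(G)$, so the induced average inside a single isotypic sector $\cQ_\mu\otimes\mathbb{C}^{m_\mu}$ must genuinely approximate a Haar/$2$-design average on that $D_\mu$-dimensional factor, and I must verify that the parametrized family of block unitaries $U_\mu(\bm\theta_\mu)$ is expressive enough to be (at least) a local $2$-design on $\cQ_\mu\otimes\mathbb{C}^{m_\mu}$ while respecting the commutant structure. I would handle this by invoking Schur--Weyl/commutant arguments to reduce the Haar average over $\textbf{R}(G)$ to independent averages within each block, after which the single-block calculation is identical to the one underlying Theorem~\ref{thm:trade-off_exp_bp}; the subtlety is ensuring the cross-block terms ($\mu\neq\mu'$) contribute no additional leading-order variance, which follows from orthogonality of distinct irreps and the block-diagonal form of $O$ in Eq.~\eqref{eq:basis_transform}.
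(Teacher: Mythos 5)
Your proposal follows essentially the same route as the paper's proof: rewrite the cost as $\sum_{\mu\in H}c_\mu^2(\bm\alpha)\langle\psi_\mu|O_\mu|\psi_\mu\rangle$, reuse the Theorem~\ref{thm:trade-off_exp_bp} argument with the coefficient moments scaling as $\cO(1/L)$ and the per-block $2$-design barren-plateau estimate now taken on a block of dimension $d_\mu m_\mu$, then bound by $d_{\rm max}$. If anything, you are more explicit than the paper about justifying the block-wise $2$-design average and the vanishing of cross-irrep terms (which the paper simply absorbs into ``similar to Theorem~\ref{thm:trade-off_exp_bp}''); the only blemish is the sentence about ``summing the $L$ terms'' cancelling the $1/L$ weight, which is unnecessary (the variance of a single partial derivative involves only one block) and, taken literally, would lose the $1/L$ factor you correctly retain in the final bound.
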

% \end{tcolorbox}

The detailed proof can be found in Appendix~\ref{appendix:detailed_proof}. In particular, we consider the unitary group, i.e., $G=\textbf{SU}(2)$. Then the Hilbert space $\cH$ can be decomposed into a direct sum of invariant sub-space under the Schur-Weyl duality, that is,
\begin{equation}
    \begin{aligned}
        \cH\overset{\textbf{SU}(2)\times\textbf{S}_N}{\cong}\bigoplus_{\mu\in Y_N^2}\cQ_{\mu}\otimes\cP_\mu,
    \end{aligned}
\end{equation}
where $Y_N^2:=\{\lambda:=(\lambda_1,\lambda_2)|\lambda_1\geq\lambda_2\geq0\,\text{and}\,\sum_{i=1}^2\lambda_i=N\}$ denotes the set of Young Diagrams, $\cQ_\mu\otimes \cP_\mu$ is invariant sub-space labeled by $\mu$. The change of basis refers to schur transform $U_{\text{sch}}$. 
Since the irreps appearing in $\textbf{R}$ are both two-row Young diagrams, one can denote the Young diagram $\mu$ as a non-negative integer $j$, that is, $\mu:=\mu(j)=(N-j,j)$, where $j=0,1,\cdots,\lfloor \frac{N}{2}\rfloor$ then, we have 
\begin{equation}
    \begin{aligned}
       d_\mu=N-2j+1,\quad m_\mu=\frac{N!(N-2j+1)!}{(N-j+1)!j!(N-2j)!}.
    \end{aligned}
\end{equation}
Note that $d_\mu\in\cO(N)$, while some multiplicity $m_\mu$ can grow exponentially with the number of qubits~\cite{schatzki2024theoretical}. For instance, we have $m_\mu=\cO(4^N/N^2)$ when $j=N/2$. It means that to prevent BP, it is essential to avoid optimization within exponentially large subspaces. We denote the set of irrep labels by $H\subset\mathbb{Y}_N^2$, wherein each inequivalent irreducible representation space exhibits polynomial-level dimensionality.

In this case, suppose the observable $O$ can be decomposed into the partially irreducible spaces, i.e.,
\begin{equation}\label{eq:schur_decomposition}
    \begin{aligned}
        U_{\text{sch}}^\dagger OU_{\text{sch}}=\bigoplus_{\mu\in H} O_\mu,
    \end{aligned}
\end{equation}
where $O_\mu\in \text{End}(\cH_\mu\otimes \cP_\mu)$ is an operator in the sub-sapce $\cH_\mu\otimes \cP_\mu$, $U_{\text{sch}}$ denotes the schur transform, which can be implemented via a quantum circuit~\cite{bacon2005quantum,bacon2006efficient}. 

Therefore, it is straightforward to complete the gradient analysis based on the above assumption. 

\begin{corollary}\label{cor:tensor_su_2}
    Given an observable $O$ decomposing as specified in Eq.~\eqref{eq:schur_decomposition}, we have  $\mathbb{E}_{\bm \alpha,\bm \theta}[\partial C(\bm \alpha,\bm \theta)]=0$, and
    \begin{equation}
        \begin{aligned}
            {\rm Var}_{\bm \alpha,\bm \theta}[\partial C(\bm \alpha,\bm \theta)]\in\cO\left(\frac{{\rm Poly}(\log {\rm Ploy}(N))}{L{\rm Ploy}(N)}\right),
        \end{aligned}
    \end{equation}
    where $L:=|H|$.
\end{corollary}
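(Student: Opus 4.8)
The plan is to obtain Corollary~\ref{cor:tensor_su_2} as a direct specialization of Theorem~\ref{thm:trade_off_sym} to the $\textbf{SU}(2)$ setting under the Schur--Weyl decomposition, so the work is mostly bookkeeping on the dimensions rather than a fresh variance computation. Theorem~\ref{thm:trade_off_sym} already gives, for any subset $H\subset\hat G$ of irrep labels, that $\mathbb{E}_{\bm\alpha,\bm\theta}[\partial C(\bm\alpha,\bm\theta)]=0$ and ${\rm Var}_{\bm\alpha,\bm\theta}[\partial C(\bm\alpha,\bm\theta)]\in\cO\big({\rm Poly}(\log d_{\rm max})/(L\, d_{\rm max})\big)$ with $d_{\rm max}:=\max_{\mu\in H}d_\mu m_\mu$ and $L:=|H|$. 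The vanishing of the expectation transfers verbatim, since nothing in its derivation uses the specific group; so the only thing I would actually verify is how $d_{\rm max}$ specializes when $H$ is chosen to consist of irrep labels $\mu=(N-j,j)$ whose combined block dimension $d_\mu m_\mu$ is at most polynomial in $N$.

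First I would recall the explicit formulas from the excerpt, $d_\mu=N-2j+1\in\cO(N)$ and $m_\mu=\tfrac{N!(N-2j+1)!}{(N-j+1)!\,j!\,(N-2j)!}$, and note that while $m_\mu$ can be exponentially large for $j$ near $N/2$, the set $H$ is by construction restricted to labels where each inequivalent irreducible space has polynomial dimensionality. Thus for every $\mu\in H$ the product $d_\mu m_\mu$ is bounded by ${\rm Poly}(N)$, whence $d_{\rm max}=\max_{\mu\in H}d_\mu m_\mu\in{\rm Poly}(N)$. Substituting $d_{\rm max}\in{\rm Poly}(N)$ into the Theorem~\ref{thm:trade_off_sym} bound immediately yields
\begin{equation}
    {\rm Var}_{\bm\alpha,\bm\theta}[\partial C(\bm\alpha,\bm\theta)]\in\cO\left(\frac{{\rm Poly}(\log{\rm Poly}(N))}{L\,{\rm Poly}(N)}\right),
\end{equation}
with $L=|H|$, which is exactly the claimed scaling. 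Since $\log{\rm Poly}(N)\in\cO(\log N)$, the numerator is polylogarithmic in $N$, confirming that the gradient variance decays only inverse-polynomially and the architecture is free of barren plateaus on this restricted subspace.

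The step I expect to require the most care is the justification that restricting $O$ to the block-diagonal form of Eq.~\eqref{eq:schur_decomposition} over $H$ genuinely places us in the hypothesis of Theorem~\ref{thm:trade_off_sym} with the $\textbf{SU}(2)$ block dimensions. Concretely, I would confirm that the parameterization of $c_\mu$ and $\ket{\psi_\mu}$ via $\bm\alpha,\bm\theta$ described just before Theorem~\ref{thm:trade_off_sym}, together with the cost function~\eqref{eq:symmtry_cost}, applies here with $\cQ_\mu\otimes\mathbb{C}^{m_\mu}$ replaced by $\cH_\mu\otimes\cP_\mu$ and block dimension $d_\mu m_\mu$; this amounts to checking that the Schur transform $U_{\rm sch}$ plays the role of $U_{\rm transform}$ and that summing over $\mu\in H$ rather than all of $\hat G$ only discards contributions outside the support of $O$. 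Once that identification is made the bound is purely a matter of plugging in the polynomial dimension count, so the remainder is routine and no new integration over the Haar measure is needed.
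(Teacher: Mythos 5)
Your proposal is correct and matches the paper's intent exactly: the paper offers no separate proof of Corollary~\ref{cor:tensor_su_2}, treating it as an immediate specialization of Theorem~\ref{thm:trade_off_sym} obtained by noting that the restriction of $H$ to labels with polynomially bounded block dimension forces $d_{\rm max}=\max_{\mu\in H}d_\mu m_\mu\in{\rm Poly}(N)$, which is precisely your substitution. Your closing check that the Schur transform $U_{\rm sch}$ plays the role of $U_{\rm transform}$ and that $\cH_\mu\otimes\cP_\mu$ replaces $\cQ_\mu\otimes\mathbb{C}^{m_\mu}$ is exactly the identification the paper relies on implicitly.
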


Theorem~\ref{thm:trade_off_sym} and Corollary~\ref{cor:tensor_su_2} establish that under trainable group action parameterizations, the trainability of quantum neural networks (QNNs) is governed by the dimensions of their inequivalent irreducible representations. This justifies the standard assumption that measurements project predominantly onto irreducible subspaces of non-exponential dimension---a property analogous to the \textit{purity} condition discussed in BP issue~\cite{Alessandro2008,Fontana2024characterizing}

\begin{figure}[t]
    \centering
    \resizebox{0.4\textwidth}{!}{%
            \begin{quantikz}
                &\qwbundle{k}&& \gate{U(\bm\theta)} & \qw
        \end{quantikz}=\begin{quantikz}
            & \gate{U_3(\bm\theta_1)}\gategroup[4,steps=4,style={dashed,rounded corners,fill=orange!20, inner xsep=2pt},background,label style={label position=north east,anchor=south}]{$\times D$}& \ctrl{1}   & \qw & \targ{} & \qw \\
            & \gate{U_3(\bm\theta_2)} & \targ{} & \ctrl{2} & \qw& \qw  \\
            \wave&&&&&\\
            & \gate{U_3(\bm\theta_k)} & \qw & \targ{} & \ctrl{-3} & \qw 
        \end{quantikz}
    }
    \caption{The $k$-local unitary circuit, which includes single-qubit universal gate, $U_3$ and \text{CNOT} gates. Parameter $D$ is the depth of this circuit.}
    \label{fig:k-local circuit}
\end{figure}
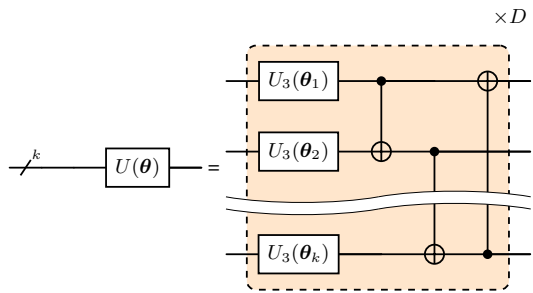

\section{Numerical Experiments}\label{sec:numerical_res}
In this section, we conduct numerical experiments to validate the trainability theory of the LCQNN framework and assess its capability to handle real-world data.

\begin{figure}[t]
    \centering
    \includegraphics[width=\linewidth]{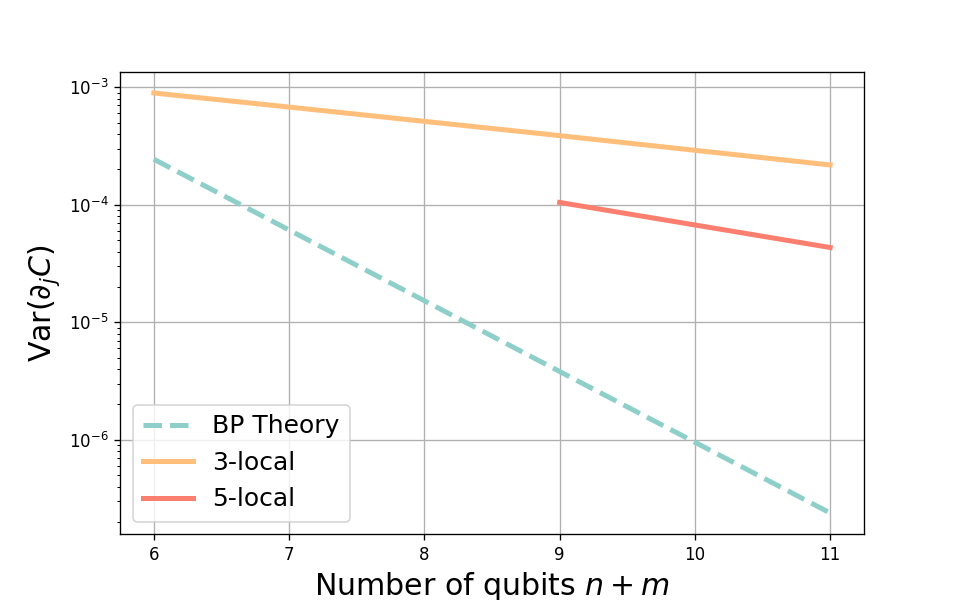}
    \caption{The trainability comparison between conventional QNNs and the LCQNN models. We set the number of systems for the coefficient layer, $m=3$, and $L=2^m$. As for the controlled systems, we set up $n\in\{3,4,6,8\}$ and the trainable circuit as shown in Fig.~\ref{fig:k-local circuit} with $D=3$ for $k=\{3,5\}$.}
    \label{fig:bp_exp_text_local}
\end{figure}

\begin{figure}[t]
    \centering
    \includegraphics[width=\linewidth]{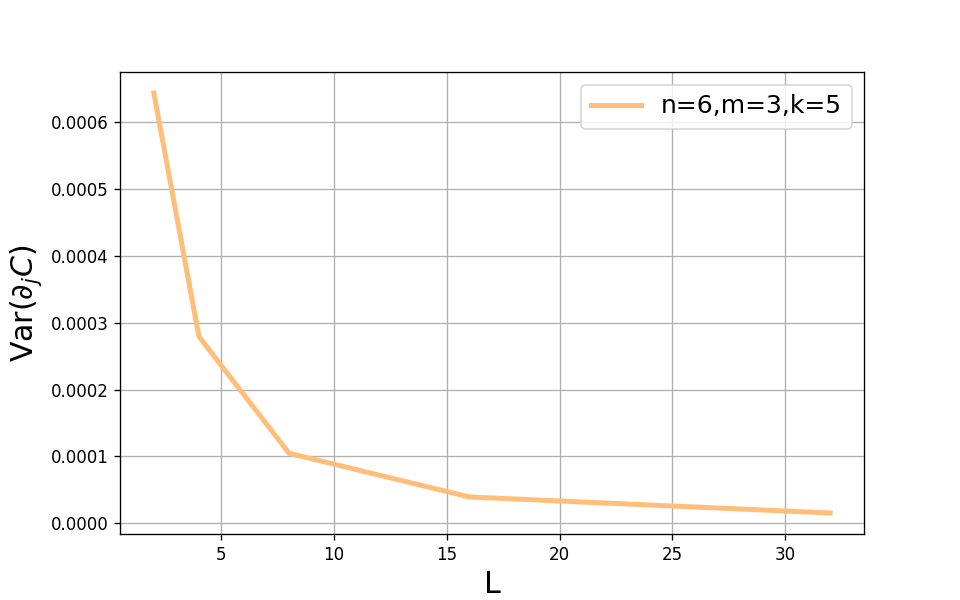}
    \caption{Variance of loss function with different layers $L$. We set the coefficient layer and controlled unitary with systems $m=3$ and $n=6$, respectively. We fix the size of the local unitary as $k=5$.}
    \label{fig:bp_exp_text_L}
\end{figure}

%%%%%%%%%%%%%%%%%%%%%%%%%%%%%%%%%%%%%%
\subsection{Validation of Gradient Scaling}

This set of experiments aims to numerically validate our theoretical findings on the trainability of the LCQNN framework. The goal is to confirm that the gradient's variance scales with the size of the local unitaries ($k$), as predicted in Proposition~\ref{cor:k-local}, and exhibits an inverse relationship with the number of combined QNN blocks ($L$), as derived in Theorem~\ref{thm:trade-off_exp_bp}.

For the experimental setup, we implemented the LCQNN model with $k$-local unitaries. Each local unitary is constructed using the complex-entangled layer \cite{quairkit} shown in Fig.~\ref{fig:k-local circuit}. To verify the dependence on local size $k$, we configured an LCQNN with $m=3$ control qubits (allowing up to $L=8$ blocks) and a circuit depth of $D=3$. We then measured the gradient variance across systems with a growing number of total target qubits $n \in \{3, 4, 6, 8\}$, while fixing the local QNN sizes to $k=3$ and $k=5$. To validate the scaling with $L$, we fixed the architecture ($m=3$, $n=6$, $k=5$, $D=3$) and varied the number of active QNN blocks. For both experiments, a trainable parameter $\theta_j$ was selected, and the variance was averaged over 500 randomly uniform parameter initializations.

The results strongly support our theoretical predictions. As shown in Fig.~\ref{fig:bp_exp_text_local}, the gradient variance is determined by the local system size $k$ and remains essentially constant as the total system size $n$ increases, thereby successfully mitigating the barren plateau phenomenon. Concurrently, Fig.~\ref{fig:bp_exp_text_L} demonstrates a clear inverse relationship, showing that the variance of the cost function diminishes as the number of layers $L$ increases. These findings confirm that the LCQNN structure provides effective and tunable levers to ensure trainability.

%%%%%%%%%%%%%%

\begin{table}[t]
\centering
\begin{tabular}{@{}ccc@{}}
\toprule
\# QNN Blocks (\(L\)) & QNN Depth (\(D\)) & Test Accuracy \\
\midrule
1 & 1 & 39.12\% $\pm$ 4.00\%   \\
2 & 1 & 45.60\%  $\pm$ 8.55\%  \\
4 & 1 & 47.07\% $\pm$ 5.42\% \\
\midrule
1 & 2 &  40.42\% $\pm$  7.06\%      \\
2 & 2 &  68.45\% $\pm$  3.24\% \\
4 & 2 &  68.83\%  $\pm$  2.85\%   \\
\midrule
1 & 4 &   41.15\%  $\pm$  2.41\%    \\
2 & 4 &  70.59\% $\pm$  3.46\% \\
4 & 4 &  70.98\% $\pm$  2.34\%   \\
\midrule
1 & 8 &  43.41\% $\pm$   2.43\%    \\
2 & 8 &  71.24\% $\pm$ 3.21\%  \\
4 & 8 &  72.36\% $\pm$   3.31\%  \\
\bottomrule
\end{tabular}
\caption{Test accuracy on the MNIST 4-class classification task (digits 0–3). The results were obtained using a 2-local LCQNN model configured with \(m=2\) control qubits and a 4-qubit working system (\(n=4\)). Performance is evaluated for different numbers of QNN blocks (\(L\)) and internal depths (\(D\)). Each value represents the mean accuracy ± standard deviation calculated over 5 independent experimental runs.}
\label{tab:mnist_classification}
\end{table}

\subsection{Quantum Classification on MNIST}
To evaluate the practical efficacy of the LCQNN framework, we applied it to a multi-class image classification task using the MNIST dataset. The objective is to demonstrate that the model can learn from real-world data and to analyze how its classification accuracy is impacted by its core architectural hyperparameters: the number of QNN blocks, $L$, and their internal depth, $D$.

Our experimental setup focused on classifying the first four digits (0-3) from the MNIST dataset. For data preprocessing, each classical image was resized to 4x4 pixels, flattened into a 16-element vector, and then normalized. This vector was embedded into the initial state of a 4-qubit system using amplitude encoding, serving as the input for the QNN. The LCQNN architecture was configured with $m=2$ control qubits, enabling the combination of up to $L=4$ QNN blocks, and a 4-qubit working system ($n=4$). Each QNN block is a \(k\)-local unitary with \(k=2\), adhering to the structure in Fig.~\ref{fig:k-local} and using the specific ansatz from Fig.~\ref{fig:k-local circuit}. For the classification output, the expectation value of the Pauli Z operator, \(\langle Z \rangle\), was measured on each of the four qubits. The resulting vector was used to compute a cross-entropy loss against the labels. All trainable parameters were uniformly initialized in the range \([0, 2\pi]\), and the model was trained for 2 epochs with a learning rate of 0.008.

The classification results, summarized in Table~\ref{tab:mnist_classification}, highlight the learning capabilities and architectural scalability of the LCQNN model. It is important to note that for a simpler binary classification version of this task, the LCQNN framework is capable of achieving high test accuracies in the 95-97\% range. However, the 4-class task was intentionally chosen for this study because it better demonstrates the advantage of increasing the number of QNN blocks ($L$). In a binary scenario that might rely on a single measurement output, the diverse features extracted by different blocks are not as easily distinguished, thus diminishing the observable benefit of a larger $L$. In contrast, the multi-output nature of the 4-class problem allows each block to contribute more distinctively to the final prediction.

With this context, the table reveals two distinct trends: for a fixed depth $D$, increasing the number of QNN blocks $L$ consistently improves the test accuracy, supporting the hypothesis that combining more feature extractors enhances performance. Likewise, for a fixed number of blocks $L$, increasing the depth $D$ also leads to better accuracy by boosting the expressivity of each individual block. These results confirm that the LCQNN can learn meaningful patterns from real-world data and that its performance can be systematically improved by scaling its architectural parameters, demonstrating its practical potential.

%%%%%%%%%%%%%%%%%%%%%%%%%%%%%%%%%%%%%%%%%%%%%%%%%%%%%%
%%%%%%%%%%%%%%%%%%%%%%%%%%%%%%%%%%%%%%%%%%%%%%%%%%%%%%
\section{Conclusion}\label{sec:conclusion}

In this work, we introduced the Linear Combination of Quantum Neural Networks (LCQNN) framework, which systematically addresses the trade-off between expressivity and trainability. By combining a coefficient layer with a control-unitary layer, LCQNN avoids excessive parameter space exploration while retaining a wide range of representable states, offering an explicit trade-off mechanism that can be leveraged in quantum architecture search.

We demonstrated that incorporating structural designs—such as $k$-local unitaries, selective circuit-depth expansions, and a tunable number of controlled unitaries—mitigates the barren plateau problem and reduces the risk of classical simulability, thereby providing a more tractable gradient landscape for optimization. Notably, the $k$-local unitaries function as feature filters, and as established in Proposition 2, this structure allows LCQNN to be scaled up significantly while preserving trainability. Our numerical experiments have verified these theoretical findings and confirmed the framework's practical applicability.

Furthermore, we extended the LCQNN framework to group action settings, where it operates as a linear combination of unitary actions from a finite or compact Lie group. In this context, we proved that trainability is governed by the maximum dimension of the irreducible representation space.

Future research could advance this work in several key directions. First, integrating advanced data embedding techniques from classical deep learning could enhance LCQNN’s adaptability for diverse tasks like image processing, natural language, and quantum chemistry. Second, developing hardware-friendly implementations, perhaps using restricted Lie-algebraic modules or low-rank factorization to minimize control gates, would improve performance on near-term quantum devices. Third, a deeper exploration of system-specific symmetries and group-theoretic embeddings could strengthen both the theoretical analysis of gradients and practical design principles. By continuing to refine the interplay between circuit structure and learning objectives, LCQNN has the potential to expand its reach across broader domains of quantum-enhanced machine learning.

%%%%%%%%%%%%%%%%%%%%%%%%%%%%%%%%%%%%%%%%%%%%%%%%%%%%%%%%%%%%%%%%%%%%%%%%%
%%%%%%%%%%%%%%%%%%%%%%%%%%%%%%%%%%%%%%%%%%%%%%%%%%%%%%%%%%%%%%%%%%%%%%%%%
\section{Acknowledgement}
This work was partially supported by the National Key R\&D Program of China (Grant No.~2024YFB4504004), the National Natural Science Foundation of China (Grant. No.~12447107), the Guangdong Provincial Quantum Science Strategic Initiative (Grant No.~GDZX2403008, GDZX2403001), the Guangdong Provincial Key Lab of Integrated Communication, Sensing and Computation for Ubiquitous Internet of Things (Grant No.~2023B1212010007), the Quantum Science Center of Guangdong-Hong Kong-Macao Greater Bay Area, and the Education Bureau of Guangzhou Municipality. G. Li was supported by National Natural Science Foundation of China (Grant No. 62402206).
%%%%%%%%%%%%%%%%%%%%%%%%%%%%%%%%%%%%%%%%%%%%%%%%%%%%%%%%%%%%%%%%%%%%%%%%%
% Bibliography
\bibliographystyle{unsrt}
\bibliography{ref}
% \bibliography{aaai2026}

%%%%%%%%% SUPPLEMENTAL MATERIAL %%%%%%%%%
\onecolumngrid
% \onecolumn
\appendix
\setcounter{subsection}{0}
\setcounter{table}{0}
\setcounter{figure}{0}

% \vspace{3cm}
% \onecolumngrid
% \vspace{2cm}
\newpage

\begin{center}
\Large{\textbf{Appendix for} \\ \textbf{ LCQNN: Linear Combination of Quantum Neural Networks
}}
\end{center}

\renewcommand{\thetheorem}{S\arabic{theorem}}
\renewcommand{\theequation}{S\arabic{equation}}
\renewcommand{\theproposition}{S\arabic{proposition}}
\renewcommand{\thelemma}{S\arabic{lemma}}
\renewcommand{\thecorollary}{S\arabic{corollary}}
\renewcommand{\thefigure}{S\arabic{figure}}
\setcounter{equation}{0}
\setcounter{table}{0}
\setcounter{proposition}{0}
\setcounter{lemma}{0}
\setcounter{corollary}{0}
\setcounter{figure}{0}
\setcounter{figure}{0}

%%%%%%%%%%%%%%%%%%%%%%%%%%%%%%%%%%%%%%%%%%%%%%%%%%%%%%%%%%%%%%%%%%%%%%%%%%%
%%%%%%%%%%%%%%%%%%%%%%%%%%%%%%%%%%%%%%%%%%%%%%%%%%%%%%%%%%%%%%%%%%%%%%%%%%%
\section{Elements of Representation Theory}\label{sec:pre-rep}
\begin{definition}[Unitary Representation]
    Let $G$ be a group and $\cH$ a Hilbert space. The unitary representation of $G$ on $\cH$ is a group homomorphism $\mathbf{R}:g\to U_g$, where $U_g$ is unitary for all $g\in G$.
\end{definition}
Notice that a group action describes how a group $G$ permutes elements of a set $X$, while a group representation is a specific type of action in which $G$ acts linearly on a Hilbert space $\cH$. The group action mentioned in this work refers to the unitary representation of groups.

\begin{lemma}[Schur lemma]
    Let $U_g$ and $V_g$ be irreducible representations of the group $G$ on Hilbert spaces $\cH$ and $\cK$, respectively. Let $O:\cH\to\cK$ be an operator such that $OU_g=V_gO$ for all $g\in G$. If $U_g$ and $V_g$ are equivalent representations, then $O=\lambda I$, where $I$ is an identity operator; otherwise, $O=0$.
\end{lemma}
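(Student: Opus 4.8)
The plan is to prove the two assertions separately, resting everything on a single observation: kernels and images of an intertwining operator are invariant subspaces, which irreducibility then forces to be trivial. Throughout I take $O\colon\cH\to\cK$ to satisfy $OU_g=V_gO$ for every $g\in G$, and I work over $\mathbb{C}$ with $\cH,\cK$ finite-dimensional so that eigenvalues are available.

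First I would treat the inequivalent case and show $O=0$. The key is that $\ker O\subseteq\cH$ is $U_g$-invariant: if $Ov=0$ then $OU_gv=V_gOv=0$, so $U_gv\in\ker O$. Likewise the image $\operatorname{ran}O\subseteq\cK$ is $V_g$-invariant, since $V_g(Ov)=OU_gv\in\operatorname{ran}O$. Because $U_g$ is irreducible, $\ker O$ is either $\{0\}$ or all of $\cH$; because $V_g$ is irreducible, $\operatorname{ran}O$ is either $\{0\}$ or all of $\cK$. If $O\neq0$ then $\ker O\neq\cH$ and $\operatorname{ran}O\neq\{0\}$, forcing $O$ to be injective and surjective, hence an isomorphism with $V_g=OU_gO^{-1}$. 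This exhibits an equivalence of the two representations, contradicting inequivalence; therefore $O=0$.

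Next I would treat the equivalent case. By hypothesis there is a fixed intertwining isomorphism $T\colon\cH\to\cK$ with $V_g=TU_gT^{-1}$; replacing $O$ by $A:=T^{-1}O\colon\cH\to\cH$ reduces the problem to an operator commuting with a single irreducible representation, i.e.\ $AU_g=U_gA$. (When the statement is read literally with $\cH=\cK$ and $U_g=V_g$, this reduction is vacuous and $A=O$.) Since $\cH$ is a finite-dimensional complex space, $A$ has an eigenvalue $\lambda$, and $A-\lambda I$ again commutes with every $U_g$. The operator $A-\lambda I$ has nonzero kernel (the $\lambda$-eigenspace), so by the invariant-subspace argument of the previous paragraph its kernel is all of $\cH$; hence $A-\lambda I=0$, i.e.\ $A=\lambda I$, and correspondingly $O=\lambda T$, which in the literal reading is exactly $O=\lambda I$.

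The main obstacle, and the only place the hypotheses truly bite, is the equivalent case: existence of the eigenvalue $\lambda$ relies on $\mathbb{C}$ being algebraically closed and on $\cH$ being finite-dimensional. Over a non-closed field the conclusion can fail, and in infinite dimensions one must replace the eigenvalue step by a spectral-theoretic argument. I would therefore state and use the result in the finite-dimensional complex setting in which it is applied in this paper, so that these subtleties do not arise.
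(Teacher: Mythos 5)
Your proof is correct, but there is nothing in the paper to compare it against: the paper states the Schur lemma in its appendix as standard background, without proof, and immediately uses it as a ``building block'' for the commutant theorem. Your argument is the classical textbook one, and it is sound in the setting where the paper applies it (finite-dimensional complex Hilbert spaces of qubits): the kernel and range of the intertwiner are invariant subspaces, irreducibility forces them to be trivial, which settles the inequivalent case; in the equivalent case you reduce via a fixed intertwining isomorphism $T$ to an operator commuting with a single irreducible representation and then use the existence of an eigenvalue over $\mathbb{C}$ to conclude $A=\lambda I$. Two points in your write-up are genuinely valuable beyond what the paper records: first, you note that the paper's conclusion ``$O=\lambda I$'' only literally makes sense when $\cH=\cK$ and $U_g=V_g$, and that in general the correct statement is $O=\lambda T$ --- this is a real (if harmless) imprecision in the paper's formulation; second, you flag that the eigenvalue step requires algebraic closure and finite dimensionality, and that an infinite-dimensional version would need a spectral-theoretic replacement, which is exactly the caveat a careful reader should attach to the lemma as stated.
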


The Schur lemma is a building block for investigating the general form of an operator commuting with a group representation.

\begin{theorem}[Commutant]
    Let $U_g$ be a unitary representation of a group $G$ and $O$ be an operator satisfying that $[O,U_g]=0$ for all $g\in G$. Then, we have
    \begin{equation}\label{eq:iso_decompo}
        \begin{aligned}
            O=\bigoplus_{\mu \in \hat{G}} I_{\mu}\otimes O_\mu,
        \end{aligned}
    \end{equation}
    where $O_\mu$ denotes an operator on the multiplicity space of the irreducible representation $U_g^{\mu}$.
\end{theorem}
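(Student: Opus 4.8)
The plan is to reduce the statement to Schur's lemma by first placing the representation in its isotypical (canonical) form and then exploiting the block structure that the commutation relation $[O,U_g]=0$ imposes. I would begin by invoking the isotypical decomposition $\cH\cong\bigoplus_{\mu\in\hat{G}}\cQ_\mu\otimes\CC^{m_\mu}$, under which the representation takes the form $U_g=\bigoplus_{\mu\in\hat{G}}\left(U_g^\mu\otimes I_{m_\mu}\right)$, where $U_g^\mu$ is the irreducible representation carried by $\cQ_\mu$ and $I_{m_\mu}$ acts trivially on the multiplicity space $\CC^{m_\mu}$. This is the natural coordinate system in which the claimed answer $O=\bigoplus_\mu I_\mu\otimes O_\mu$ can even be stated, since it exhibits explicitly the separation between the irreducible factor and the multiplicity factor.

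Next I would write $O$ in block form relative to this decomposition, $O=\sum_{\mu,\nu}O_{\mu\nu}$ with $O_{\mu\nu}\colon\cQ_\nu\otimes\CC^{m_\nu}\to\cQ_\mu\otimes\CC^{m_\mu}$, and translate the hypothesis $[O,U_g]=0$ into the intertwining conditions $O_{\mu\nu}\left(U_g^\nu\otimes I_{m_\nu}\right)=\left(U_g^\mu\otimes I_{m_\mu}\right)O_{\mu\nu}$ for every $g\in G$. For $\mu\neq\nu$ the irreps $U_g^\mu$ and $U_g^\nu$ are inequivalent by construction of the isotypical decomposition, so Schur's lemma (in the form stated above) forces $O_{\mu\nu}=0$. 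This annihilates all off-diagonal blocks and reduces the problem to analyzing each diagonal block $O_{\mu\mu}$ in isolation.

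The remaining and most delicate step is the diagonal analysis, where equivalence of the irreps in play means Schur's lemma yields scalars rather than zero, and the multiplicity must be tracked carefully. Here I would fix an orthonormal basis $\{\ket{i}\}_{i=1}^{m_\mu}$ of the multiplicity space and expand $O_{\mu\mu}=\sum_{i,j}A^{(\mu)}_{ij}\otimes\ketbra{i}{j}$, with each $A^{(\mu)}_{ij}$ an operator on $\cQ_\mu$. Matching the intertwining relation term by term against the linearly independent matrix units $\ketbra{i}{j}$ shows that every $A^{(\mu)}_{ij}$ commutes with the irreducible $U_g^\mu$ for all $g$, whence Schur's lemma gives $A^{(\mu)}_{ij}=\lambda^{(\mu)}_{ij}I_\mu$ for scalars $\lambda^{(\mu)}_{ij}\in\CC$. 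Reassembling yields $O_{\mu\mu}=I_\mu\otimes O_\mu$ with $O_\mu:=\sum_{i,j}\lambda^{(\mu)}_{ij}\ketbra{i}{j}$ an operator supported purely on the multiplicity space, and summing over $\mu$ gives the claimed form $O=\bigoplus_\mu I_\mu\otimes O_\mu$. The main obstacle is precisely this bookkeeping of multiplicities: one must recognize that the cross-copy blocks \emph{within} a single isotypical sector are intertwiners between \emph{equivalent} irreps (hence nonzero scalars that assemble into $O_\mu$), and not conflate them with the genuinely inequivalent blocks across distinct $\mu$ that vanish. For finite or compact $G$ acting on the finite-dimensional $\cH$ of the qubit setting here, no convergence or measurability subtleties intervene, so the argument closes once this indexing is handled cleanly.
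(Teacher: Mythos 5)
Your proof is correct. Note, however, that the paper does not actually prove this statement: it appears in the appendix as a standard quoted fact of representation theory (the structure of the commutant of a unitary representation), with no argument supplied, so there is no in-paper proof to compare against. Your route --- isotypical decomposition, Schur's lemma to annihilate blocks between inequivalent sectors, and a matrix-unit expansion over the multiplicity space to reduce each diagonal block to scalars assembling into $O_\mu$ --- is the standard textbook derivation and is precisely the argument the paper implicitly relies on. One small point of hygiene: in the off-diagonal step you invoke Schur's lemma directly on $O_{\mu\nu}$ as an intertwiner of $U_g^\nu\otimes I_{m_\nu}$ and $U_g^\mu\otimes I_{m_\mu}$, but these representations are not irreducible when the multiplicities exceed one, so the lemma as stated in the paper does not literally apply to them; the repair is the same matrix-unit expansion you already use on the diagonal, writing $O_{\mu\nu}=\sum_{i,j}A_{ij}\otimes\ketbra{i}{j}$ with $A_{ij}\colon\cQ_\nu\to\cQ_\mu$ intertwining the genuinely irreducible and inequivalent $U_g^\nu$ and $U_g^\mu$, whence each $A_{ij}=0$. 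With that wording tightened, the argument is complete.
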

Therefore, it is straightforward to see that the group average operator $\overline{O}:=\frac{1}{|G|}\sum_{g\in G}U_g O U_g^\dagger$ can be decomposed into the form like Eq.~\eqref{eq:iso_decompo}.

The direct sum of irreducible representations leads to the decomposition of the Hilbert space $\cH$, i.e., isotypical decomposition. Specifically, suppose $G=\textbf{SU}(d)$ is a unitary group, and the carry space is $(\mathbb{C}^d)^{\otimes n}$. Then, we have
\begin{equation}
    \begin{aligned}
        (\mathbb{C}^d)^{\otimes n}=\bigoplus_{\mu\in Y_n^d}\cH_\mu\otimes \mathbb{C}^{m_\mu},
    \end{aligned}
\end{equation}
where $Y_n^d$ denotes the set of Young Diagrams. Therefore, for the case of $n=3$ and $d=3$, we have three irreducible representations, denoted by Young Diagrams, $\yng(3)$, $\yng(2,1)$ and $\yng(1,1,1)$, resepctively. The dimensions of irredcible representation spaces and the corresponding mutiplicity sapces are $d_{\yng(3)}=10$, $d_{\yng(2,1)}=8$, $d_{\yng(1,1,1)}=1$, $m_{\yng(3)}=1$, $m_{\yng(2,1)}=2$, and $m_{\yng(1,1,1)}=1$. Thus, we have 
\begin{equation}
    \begin{aligned}
        (\mathbb{C}^3)^{\otimes 3}=\cH_{\yng(3)}\oplus\cH_{\yng(2,1)}\oplus \cH_{\yng(2,1)}\oplus \cH_{\yng(1,1,1)}.
    \end{aligned}
\end{equation}

%%%%%%%%%%%%%%%%%%%%%%%%%%%%%%%%%%%%%%%%%%%%%%%%%%%%%%%%%%%%%%%%%%%%%%%%%%
%%%%%%%%%%%%%%%%%%%%%%%%%%%%%%%%%%%%%%%%%%%%%%%%%%%%%%%%%%%%%%%%%%%%%%%%%%
\section{Detailed Proofs}\label{appendix:detailed_proof}

\renewcommand\theproposition{\ref{cor:k-local}}
\setcounter{proposition}{\arabic{proposition}-1}
% \begin{tcolorbox}
\begin{proposition}[k-local QNNs]\label{appendix:k-local}
     The LCQNN with $k$-local unitary is defined as $W:=\prod_{j=0}^{L-1}\left[{\rm C}\text{-}\left(\bigotimes_tU_{jt}\right)\right]\cdot(V\otimes I^{\otimes n})$, where $U_{jt}$ is a $k$- local unitary $2$-design, as shown in Fig.~\ref{fig:k-local}.
    Then, the LCQNN $W$ can achieve the pure state $\ket{\psi}:=\bigoplus_{j=0}^{2^m-1}\left(\sqrt{p_j}\ket{\psi_j}\bigotimes_i\ket{\phi_{jt}}\right)\in(\mathbb{C}^2)^{\otimes m+n}$, with $p_j=0$ for $j\geq L$. Furthermore, the variance of the cost gradient scales with the number of qubits as 
    \begin{equation}
        \begin{aligned}
            {\rm Var}_{\bm \alpha,\bm \theta}[\partial C(\bm \alpha,\bm \theta)]\in\cO\left(\frac{{\rm Poly}(k)}{L2^{k}}\right).
        \end{aligned}
    \end{equation}
\end{proposition}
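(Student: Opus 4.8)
The plan is to mirror the two-part structure of the proof of Theorem~\ref{thm:trade-off_exp_bp}, adapting each step to the tensor-product control layer $\bigotimes_t U_{jt}$ in place of the monolithic $U_j$. For the expressivity claim, I would first note that the coefficient layer behaves exactly as before: $V(\bm\alpha)\ket{0}^{\otimes m}=\sum_{j=0}^{L-1}\sqrt{p_j(\bm\alpha)}\ket{j}$ with $p_j=0$ for $j\geq L$. Propagating this superposition through the branch-wise action of the control gates yields
\begin{equation}
    W\ket{0}^{\otimes m+n}=\bigoplus_{j=0}^{L-1}\sqrt{p_j(\bm\alpha)}\Bigl(\bigotimes_t U_{jt}(\bm\theta_{jt})\Bigr)\ket{0}^{\otimes n}.
\end{equation}
Since each $U_{jt}$ is universal on its $k$-qubit block, the product $\bigotimes_t U_{jt}\ket{0}^{\otimes n}$ ranges over all product states $\bigotimes_t\ket{\phi_{jt}}$ across the blocks, which is precisely the reduced reachable set $\ket{\psi}=\bigoplus_j\sqrt{p_j}\ket{\psi_j}\bigotimes_t\ket{\phi_{jt}}$ stated in the proposition; the loss of full intra-branch entanglement is the price of locality.

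For trainability, I would write the cost in the branch-decomposed form
\begin{equation}
    C(\bm\alpha,\bm\theta)=\sum_{j=0}^{L-1}p_j(\bm\alpha)\bra{0}^{\otimes n}\Bigl(\bigotimes_t U_{jt}^\dagger\Bigr)O\Bigl(\bigotimes_t U_{jt}\Bigr)\ket{0}^{\otimes n},
\end{equation}
and single out one differentiated parameter $\theta$ sitting inside a fixed block $U_{j^\star t^\star}$. The vanishing of the mean, $\mathbb{E}_{\bm\alpha,\bm\theta}[\partial C]=0$, then follows exactly as in Theorem~\ref{thm:trade-off_exp_bp}: the derivative factorizes either into $\partial_{\alpha}p_j$, whose $\bm\alpha$-average is zero, or into a first-order moment of the $k$-local $2$-design, which vanishes because odd Haar moments vanish. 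The substance is the second moment, and here I would use the independence of $\bm\alpha$ and $\bm\theta$ to split $\mathbb{E}[\partial^2 C]$ into the coefficient factor $\mathbb{E}_{\bm\alpha}[p_{j^\star}^2]\in\cO(1/L)$—obtained by the same direct integration of the $R_y$-generated distribution that gave $\cO(1/2^t)$ before, with $L=2^t$—and the control-layer factor $\mathbb{E}_{\bm\theta}[\partial_\theta^2\bra{0}^{\otimes n}(\bigotimes_t U_{jt}^\dagger)O(\bigotimes_t U_{jt})\ket{0}^{\otimes n}]$.

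The crux, and the step I expect to be the main obstacle, is evaluating this control-layer second moment over a tensor product of independent $k$-local $2$-designs. The plan is to apply the second-moment (Weingarten) calculus block by block: integrating over the undifferentiated blocks $U_{jt}$ with $t\neq t^\star$ replaces $O$ by an effective operator on the differentiated block, obtained by tracing against maximally mixed states on the remaining blocks, while integrating over $U_{j^\star t^\star}$ itself produces the standard local-$2$-design barren-plateau factor of order $\cO({\rm Poly}(k)/2^{k})$ in the block dimension $2^k$ rather than $2^n$. This relies on the structural assumption, visible in Fig.~\ref{fig:k-local}, that $O$ and the measurement operators on the $n_i$ subsystems respect the $k$-local tensor decomposition, so that the cross-block Weingarten contributions collapse cleanly; controlling the case where $O$ straddles several blocks and bounding the resulting ${\rm Poly}(k)$ prefactor uniformly across branches is the delicate bookkeeping. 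Combining the two factors via $\mathbb{E}[\partial^2 C]=\mathbb{E}_{\bm\alpha}[p_{j^\star}^2]\cdot\mathbb{E}_{\bm\theta}[\partial_\theta^2(\cdots)]$ and identifying the variance with this second moment (since $\mathbb{E}[\partial C]=0$) yields ${\rm Var}_{\bm\alpha,\bm\theta}[\partial C]\in\cO({\rm Poly}(k)/(L2^{k}))$, completing the argument.
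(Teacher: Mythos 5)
Your proposal is correct and follows essentially the same route as the paper's proof: reduce the cost to a branch-and-block decomposed sum, get $\mathbb{E}_{\bm\alpha}[p_{j}^2]\in\cO(1/L)$ from the coefficient layer exactly as in Theorem~\ref{thm:trade-off_exp_bp}, and apply the standard local-$2$-design barren-plateau bound on the $k$-qubit block containing the differentiated parameter. The paper's own argument is terser --- it simply assumes the observable splits as a sum of terms $O_i$ each supported inside a single block (the ``$\text{sys}_{n_i}\in\text{sys}_t$'' condition) and invokes the standard BP theorem per block --- whereas you spell out the Weingarten bookkeeping and correctly flag the implicit structural assumption on $O$ as the load-bearing step.
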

% \end{tcolorbox}
\begin{proof}
     It is straightforward to see that the pure state $\ket{\psi}$ can be obtained after applying the LCQNN $W$, which characterizes the expressivity of this model.

    We mainly turn to investigate the trainability by focusing on these two terms $\mathbb{E}_{\bm \alpha,\bm \theta}[\partial C(\bm \alpha,\bm \theta)]$ and $\text{Var}_{\bm \alpha,\bm \theta}[\partial C(\bm \alpha,\bm \theta)]$. We further derive the cost function, which can be written in the following form:
    \begin{equation}\label{eq:loss_function_sym}
        \begin{aligned}
            C(\bm\alpha,\bm\theta):=\sum_{j=0}^{2^t-1}p_j(\bm\alpha)\sum_{i=1,\text{sys}_{n_i}\in\text{sys}_t}^{r}\bra{0}U_{jt}O_iU_{jt}^\dagger\ket{0},
        \end{aligned}
    \end{equation}
    where $O_i$ denotes the measurement on the $n_i$ system, and the identity operator has been omitted. The initial state $\ket{0}$ means the zero state on multi-systems. As shown in Fig.~\ref{fig:k-local}, denote the index $q_i$ as the number of local unitaries controlled by the $i$-th basis. The notation $\text{sys}_{n_i}\in\text{sys}_{t}$ indicates the systems $t\in[q_i]$ contains the system $n_i$. Due to $U_{jt}$ is $k$-local unitary, one can derive the BP theorem with respect to the $k$ systems.
    
    Similar to the proof in Theorem~\ref{thm:trade-off_exp_bp}, we obtain the target expectation and variance of the cost function, which is a standard BP theorem corresponding to the $k$-local system, which completes this proof.
\end{proof}

\renewcommand\thetheorem{\ref{thm:trade_off_sym}}
\setcounter{theorem}{\arabic{theorem}-1}
% \begin{tcolorbox}
\begin{theorem}\label{appendix:trade_off_sym}
    Let $H\subset \hat{G}$ be a subset of irrep labels. Then, we have $\mathbb{E}_{\bm \alpha,\bm \theta}[\partial C(\bm \alpha,\bm \theta)]=0$, and
    \begin{equation}
        \begin{aligned}
            {\rm Var}_{\bm \alpha,\bm \theta}[\partial C(\bm \alpha,\bm \theta)]\in\cO\left(\frac{{\rm Poly}(\log d_{\rm max})}{Ld_{\rm max}}\right),
        \end{aligned}
    \end{equation}
    where $d_{\rm max}:=\max_{\mu\in H} d_\mu m_\mu$ and $L:=|H|$.
\end{theorem}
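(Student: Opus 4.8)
The plan is to mirror the structure of the proofs of Theorem~\ref{thm:trade-off_exp_bp} and Proposition~\ref{cor:k-local}, with the $n$-qubit control space replaced by the individual isotypical blocks. The starting point is the reduced cost function of Eq.~\eqref{eq:symmtry_cost}, which after parameterization reads $C(\bm\alpha,\bm\theta)=\sum_{\mu\in H}c_\mu^2(\bm\alpha)\bra{\psi_\mu(\bm\theta_\mu)}O_\mu\ket{\psi_\mu(\bm\theta_\mu)}$. The block-diagonality $U_{\rm transform}^\dagger OU_{\rm transform}=\bigoplus_\mu O_\mu$, which is exactly the commutant/Schur structure recorded in the appendix, guarantees that no cross-block terms survive. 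Here $c_\mu^2(\bm\alpha)$ is the probability weight produced by the $R_y$ coefficient layer over the $L=|H|$ retained labels, and $\ket{\psi_\mu(\bm\theta_\mu)}$ is prepared inside the block $\cQ_\mu\otimes\mathbb{C}^{m_\mu}$ of dimension $D_\mu:=d_\mu m_\mu$ by a unitary $2$-design. Since $\bm\alpha$ and $\bm\theta$ are independent, every expectation factorizes into an $\bm\alpha$-integral and a per-block $\bm\theta_\mu$-integral.

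First I would dispatch the mean gradient. For a control parameter $\theta$ living in block $\mu$, only the $\mu$-th summand depends on it, so $\partial_\theta C=c_\mu^2(\bm\alpha)\,\partial_\theta\bra{\psi_\mu}O_\mu\ket{\psi_\mu}$. Averaging over the block $2$-design makes $\mathbb{E}_{\bm\theta_\mu}\!\big[\bra{\psi_\mu}O_\mu\ket{\psi_\mu}\big]=\tr[O_\mu]/D_\mu$ independent of $\theta$, so its derivative has vanishing mean; for a coefficient parameter $\alpha$ the periodicity of the $R_y$ gate integrates $\partial_\alpha c_\mu^2$ to zero. Either mechanism gives $\mathbb{E}_{\bm\alpha,\bm\theta}[\partial C]=0$, exactly as in Theorem~\ref{thm:trade-off_exp_bp}.

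Next, since the mean vanishes, ${\rm Var}[\partial_\theta C]=\mathbb{E}[(\partial_\theta C)^2]$, and by independence this factorizes as $\mathbb{E}_{\bm\alpha}[c_\mu^4]\cdot\mathbb{E}_{\bm\theta_\mu}\!\big[(\partial_\theta\bra{\psi_\mu}O_\mu\ket{\psi_\mu})^2\big]$. The first factor is $\cO(1/L)$ by the same direct integration of the $R_y$ layer that gives $\mathbb{E}_{\bm\alpha}[p_j^2]\in\cO(1/2^t)$ in Theorem~\ref{thm:trade-off_exp_bp} (a roughly uniform weight over $L$ labels has $\mathbb{E}[c_\mu^4]$ of order $1/L$). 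The second factor is the standard barren-plateau variance for a $2$-design acting on a space of dimension $D_\mu$, namely $\cO({\rm Poly}(\log D_\mu)/D_\mu)$, with the operator-dependent constants ($\tr[O_\mu^2]$ and $\|O_\mu\|$) absorbed into the polylogarithmic factor. Multiplying, the variance contributed by block $\mu$ scales as $\cO({\rm Poly}(\log D_\mu)/(L D_\mu))$; the most strongly suppressed, and hence trainability-limiting, gradient is the one sitting in the largest block, so setting $d_{\rm max}=\max_{\mu\in H}d_\mu m_\mu$ yields the stated bound. The $\alpha$-gradient is handled by the same factorization, the coefficient integral again contributing an $\cO(1/L)$ factor.

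The main obstacle is the within-block $2$-design step. One must argue that the parameterized action $\textbf{R}(G)$ (together with whatever gates dress the multiplicity register) induces an ensemble on $\cQ_\mu\otimes\mathbb{C}^{m_\mu}$ whose first two moments match the Haar measure on $\mathrm{U}(D_\mu)$, so that the Weingarten integrals collapse to the $1/D_\mu$ scaling. A pure group twirl only randomizes the irrep factor $\cQ_\mu$ of dimension $d_\mu$ and acts trivially on the multiplicity space, so the reduction to the full dimension $D_\mu=d_\mu m_\mu$ requires the ``general'' LCQNN to also explore the multiplicity register. Verifying this, or, for $\textbf{SU}(2)$, tracking exactly which of $d_\mu$ and $m_\mu$ enters the denominator, is the delicate part, and it is precisely what forces the restriction of $H$ to blocks of polynomial total dimension so that $d_{\rm max}$, and hence the Corollary~\ref{cor:tensor_su_2} bound, stays inverse-polynomial and free of barren plateaus.
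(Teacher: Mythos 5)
Your proposal follows essentially the same route as the paper's proof: parameterize the block-diagonal cost $C=\sum_{\mu\in H}c_\mu^2\langle\psi_\mu|O_\mu|\psi_\mu\rangle$, use independence of $\bm\alpha$ and $\bm\theta$ to factorize the second moment into an $\cO(1/L)$ coefficient-layer integral times the standard $2$-design barren-plateau variance on a block of dimension $d_\mu m_\mu$, and take the largest block to get $d_{\rm max}$. You are in fact slightly more careful than the paper on two points it glosses over --- the correct $\mathrm{Poly}(\log D_\mu)/D_\mu$ (rather than $\mathrm{Poly}(D_\mu)/2^{D_\mu}$) scaling for the within-block variance, and the need for the parameterized ensemble to be a $2$-design on the full block $\cQ_\mu\otimes\mathbb{C}^{m_\mu}$ including the multiplicity register, which the paper simply asserts as universality of $U_\mu$ on that subspace.
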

% \end{tcolorbox}
\begin{proof}
    Firstly, we are going to show that there exists a LCQNN that can achieve the target pure state. Denote $s:=\lceil \log d_{\rm max}\rceil$. Similar to Theorem~\ref{thm:trade-off_exp_bp}, we can obtain the following state by applying the parameterized circuit $V(\bm\alpha)$:
    \begin{equation}
        \begin{aligned}
            V(\bm\alpha)\ket{0}^{\otimes s}=\sum_{\mu\in H}c_\mu(\bm\alpha)\ket{j},
        \end{aligned}
    \end{equation}
    where $c_\mu(\bm\alpha):=\prod_{k=1}^t[\cos^2(\alpha_{\mu_k})]^{1-\mu_k}[\sin^2(\alpha_{\mu_k})]^{\mu_k}$, $L=2^t$ (w.l.o.g.), $\mu=\sum_{k=1}^t \mu_k 2^{t-k}$. Notice that to prevent the misuse of symbols, $\mu$ is employed not only to denote an irrep but also to indicate its position within the labels subset $H$.
    Trainable parameters $\alpha_{j_k}\in\bm\alpha:=\{\alpha_0,\cdots,\alpha_{L-2}\}$, and $\ket{j}$ is the basis on the Hilbert space $(\mathbb{C}^2)^{\otimes s}$. As a result, after applying the whole circuit $W$, one can obtain the output state as follows:
    \begin{equation}\label{eq:pqc_state_sym}
        \begin{aligned}
            W\ket{0}^{\otimes m+s}
            &=\bigoplus_{\mu\in H}c_\mu U_\mu(\bm \theta_\mu)\ket{0}^{\otimes s},
        \end{aligned}
    \end{equation}
    where $\bm \theta_j\in \bm \theta=\{\bm \theta_0,\cdots,\bm \theta_{L-1}\}$. Since $U_\mu(\bm\theta_\mu)$ is universal with some parameter $\bm\theta_\mu$ under the sub-space $\cQ_\mu\otimes\mathbb{C}^{m_\mu}$. Thus, there exist parameters $\bm\theta$ and $\bm\alpha$, such that the target pure state can be achieved. Notice that there exists a circuit that allows the system to revert to the Hilbert space $\cH$ by adjusting the trivial space.
    
    Secondly, we are going to investigate the trainability by focusing on these two terms $\mathbb{E}_{\bm \alpha,\bm \theta}[\partial C(\bm \alpha,\bm \theta)]$ and $\text{Var}_{\bm \alpha,\bm \theta}[\partial C(\bm \alpha,\bm \theta)]$. 
    One can find that the only difference from Theorem~\ref{thm:trade-off_exp_bp} is that
    $\mathbb{E}_{\bm\theta}[\partial^2_{\theta_\mu}\bra{0}^{\otimes n}U^\dagger_\mu(\bm\theta_\mu)OU(\bm\theta_\mu)\ket{0}^{\otimes n}]\in\cO(\text{Poly}(d_\mu m_\mu)/2^{(d_\mu m_\mu)})$. We choose the maximal dimension $d_{\rm max}$ of the irreducible space and its multiplicity space $\cQ_\mu\otimes \mathbb{C}^{m_{\mu}}$, which completes this proof.
\end{proof}

\end{document}